\newtheorem{theorem}{Theorem}
\newtheorem{lemma}[theorem]{Lemma}
\newdefinition{remark}{Remark}
\newproof{proof}{Proof}
\newproof{pot}{Proof of Theorem \ref{thm2}}
\def\WVP{\mbox{\it{WVP}}}
\def\SPT{\mbox{\it{SPT}}}
\def\SP{\mbox{\it{SP}}}
\def\P{\mbox{${P}$}}
\def\SeeThrough{\mbox{\textsc{See-Through}}}
\def\ps@pprintTitle{%
  \let\@oddhead\@empty
  \let\@evenhead\@empty
  \def\@oddfoot{\reset@font\hfil\thepage\hfil}
  \let\@evenfoot\@oddfoot
}
\begin{document}
\title{Weak Visibility Queries of Line Segments \\ in Simple Polygons and Polygonal Domains}

\author[add1]{Mojtaba Nouri Bygi}
\ead{nouribygi@ce.sharif.edu}
\author[add1,add2]{Mohammad Ghodsi}
\ead{ghodsi@sharif.edu}

\address[add1]{Computer Engineering Department, Sharif University of Technology, Iran}
\address[add2]{School of Computer Science, Institute for Research in Fundamental Sciences (IPM), Iran}

  \begin{abstract}
In this paper we consider the problem of computing 
the weak visibility polygon of any query line segment $pq$ (or $\WVP(pq)$) inside a given 
polygon $\P$. Our first non-trivial algorithm runs in simple polygons
and needs $O(n^3 \log n)$ time and $O(n^3)$ space in the preprocessing phase to 
report $\WVP(pq)$ of any query line segment $pq$ in time $O(\log n + |\WVP(pq)|)$.
We also give an algorithm to compute the weak visibility
polygon of a query line segment in a non-simple polygon with $h$ pairwise-disjoint polygonal obstacles 
with a total of $n$ vertices. Our algorithm needs $O(n^2 \log n)$ time and $O(n^2)$ 
space in the preprocessing phase and computes $\WVP(pq)$ in query time of $O(n\hbar \log n + k)$, in which
$\hbar$ is an output sensitive parameter of at most $\min(h,k)$,
and $k = O(n^2h^2)$ is the output size. This is the best query-time result 
on this problem so far.
\end{abstract}
\begin{keyword}
Computational Geometry, Visibility, Line Segment Visibility
\end{keyword}

\maketitle

\section{Introduction \label{sec:intro}}

Two points inside a polygon are {\em visible} to each other if their connecting segment remains
completely inside the polygon. {\em Visibility polygon} $VP(q)$ of a point
$q$ in a simple polygon $\P$ is the set of $\P$ points that are visible from $q$. 
The visibility problem has also been considered for line segments.
A point $v$ is said to be {\em weakly visible} to a line segment $pq$ if there exists
a point $w \in pq$ such that $w$ and $v$ are visible to each other.
The problem of computing the {\em weak visibility polygon} (or $\WVP$) of $pq$ inside a polygon $\P$ is 
to compute all points of $\P$ that are weakly visible from $pq$.

If $\P$ is a simple polygon, $\WVP(pq)$ can be computed in linear time \cite{guibas,toussainta}.
For a polygon with holes, the weak visibility polygon has a complicated structure.
Suri and O'Rourke \cite{suri} 
showed that the weak visibility polygon can be computed in $O(n^2)$ time
if output as a union of $O(n^2)$ triangular regions. They also showed that
$\WVP(pq)$ can be output as a polygon in $O(n^2 \log n + k)$, where $k$ is $O(n^4)$.
Their algorithm is worst-case optimal as there are polygons with holes whose weak visibility 
polygon from a given segment can have $\Omega(n^4)$ vertices.

The query version of this problem has been considered by few.
It is shown in \cite{bose} that a simple polygon $\P$ can be preprocessed 
in $O(n^3 \log n)$ time and $O(n^3)$ space
such that given an arbitrary query line segment inside the polygon, 
$O(k \log n)$ time is required to recover $k$ weakly visible vertices.
This result was later improved by \cite{aronov} in which the preprocessing time and space were 
reduced to $O(n^2 \log n)$ and $O(n^2)$ respectively, at the expense of more query 
time of $O(k \log^2 n )$.
In a recent work, we presented an algorithm to report $\WVP(pq)$
of any $pq$ in $O(\log n + |\WVP(pq)|)$ time by spending $O(n^3 \log n)$
time and  $O(n^3)$ space for preprocessing~\cite{nouri}.
Later, Chen and Wang considered the same problem and, by improving the 
preprocessing time of the visibility algorithm of Bose {\em et al.} \cite{bose},
they improved the preprocessing time to $O(n^3)$ \cite{chen2}.
In another work \cite{nouri2}, we showed that the $\WVP(pq)$ 
can be reported in near optimal time of $O(\log^2 n + |\WVP(pq)|)$, 
after preprocessing the input polygon in time and space of $O(n^2 \log n)$
and $O(n^2)$, respectively.

\subsection{Our results}

In the first part of this paper, we present an algorithm for computing the weak visibility 
polygon of any query line segment in a simple polygons $\P$.
We build a data structure in $O(n^3 \log n)$ time and $O(n^3)$ space that can compute 
$\WVP(pq)$  in $O(\log n + |\WVP(pq)|)$ time for any query line segment $pq$.
A preliminary version of this result appeared in~\cite{nouri}.

In the second part of the paper, we consider the problem of computing $\WVP(pq)$ in polygonal domains.
For a polygon with $h$ holes and total vertices of $n$, our algorithm needs the preprocessing
time of $O(n^2 \log n)$ and space of $O(n^2)$. 
We can compute $WVP(pq)$ in time $O(n\hbar \log n + k)$. 
Here $\hbar$ is an output sensitive parameter of at most $\min(h,k)$,
and $k = O(n^2h^2)$ is the size of the output polygon. 
Our algorithm is an improvement over the previous result of Suri and O'Rourke \cite{suri},
considering the extra cost of preprocessing.

\subsection{Terminologies} \label{sec:pre:decompos}
Let $\P$ be a polygon with total vertices of $n$. Also, let $p$ be 
a point inside $\P$.
The {\em visibility sequence} of a point $p$ is 
the sequence of vertices and edges of $\P$ that are visible from $p$. 
A {\em visibility decomposition} of $\P$ is to partition $\P$ into a set of 
{\em visibility regions}, such that any point inside each region has the same visibility sequence. 
This partition is induced by 
the {\em critical constraint edges},
which are the lines in the polygon each induced by two vertices of $\P$,
such that the visibility sequences of the points
on its two sides are different.

The visibility sequences of two  {\em neighboring} visibility regions which are separated by an edge, 
differ only in one vertex.
This fact is used to reduce the space complexity of maintaining the 
visibility sequences of the regions \cite{bose}.
This is done by defining the {\em sink regions}. 
A sink is a region with the smallest visibility sequence compared to all 
of its adjacent regions.
It is therefore sufficient to only maintain the visibility sequences of the sinks,
from which the
visibility sequences of all other regions can be computed.
By constructing a directed dual graph (see Figure~\ref{fig:f2}) over 
the visibility regions, one can maintain the difference between the visibility sequences of 
the neighboring regions \cite{bose}.

\begin{figure}[h]
  \centering
  \includegraphics[width=1\columnwidth]{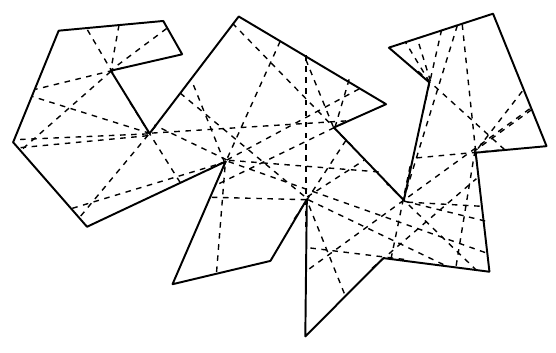} 
  \caption{The visibility decomposition induced by the critical constraints.}
  \label{fig:f2}
\end{figure}

In a simple polygon with $n$ vertices, the number of visibility
and sink regions are respectively $O(n^3)$ and $O(n^2)$~\cite{bose}.
For a non-simple polygon, these numbers are both $O(n^4)$ \cite{zarei}.

\subsection{A linear time algorithm for computing $\WVP$} \label{sec:guibas}
Here, we present the linear algorithm of Guibas {\em et al.}\ for computing
$\WVP(pq)$ of a line segment $pq$ inside $\P$, 
as described in \cite{ghosh}.
This algorithm is used in computing the weak viability polygons in an output sensitive way 
to be explained in Section~\ref{sec:first-alg}.
For simplicity, we assume that $pq$ is a convex edge of $\P$,
but we will show that this can be extended to any line segment in the polygon.

Let $\SPT(p)$ denote the shortest path tree in $\P$ rooted at $p$. 
The algorithm traverses $\SPT(p)$ using a DFS and checks the turn at each vertex
$v_i$ in $\SPT(p)$. If the path makes a right turn at $v_i$, then 
we find the descendant of $v_i$ in the tree with the largest index $j$ (see Figure \ref{fig:guibas}).
As there is no vertex between $v_j$ and $v_{j+1}$,
we can compute the intersection point $z$ of $v_jv_{j+1}$ and $v_kv_i$ 
in $O(1)$ time, where $v_k$ is the
parent of $v_i$ in $\SPT(p)$.
Finally the counter-clockwise boundary 
of $\P$ is removed from $v_i$ to $z$ by inserting the segment $v_iz$.

Let $\P'$ denote the remaining portion of $\P$. We follow the same procedure for
$q$. This time, the algorithm checks every vertex to see whether the path 
makes its first left turn. If so, we will cut the polygon at that vertex in a similar way. 
After finishing the procedure, 
the remaining portion of $\P'$ would be $\WVP(pq)$.

\begin{figure}[h]
  \centering
  \includegraphics[width=1\columnwidth]{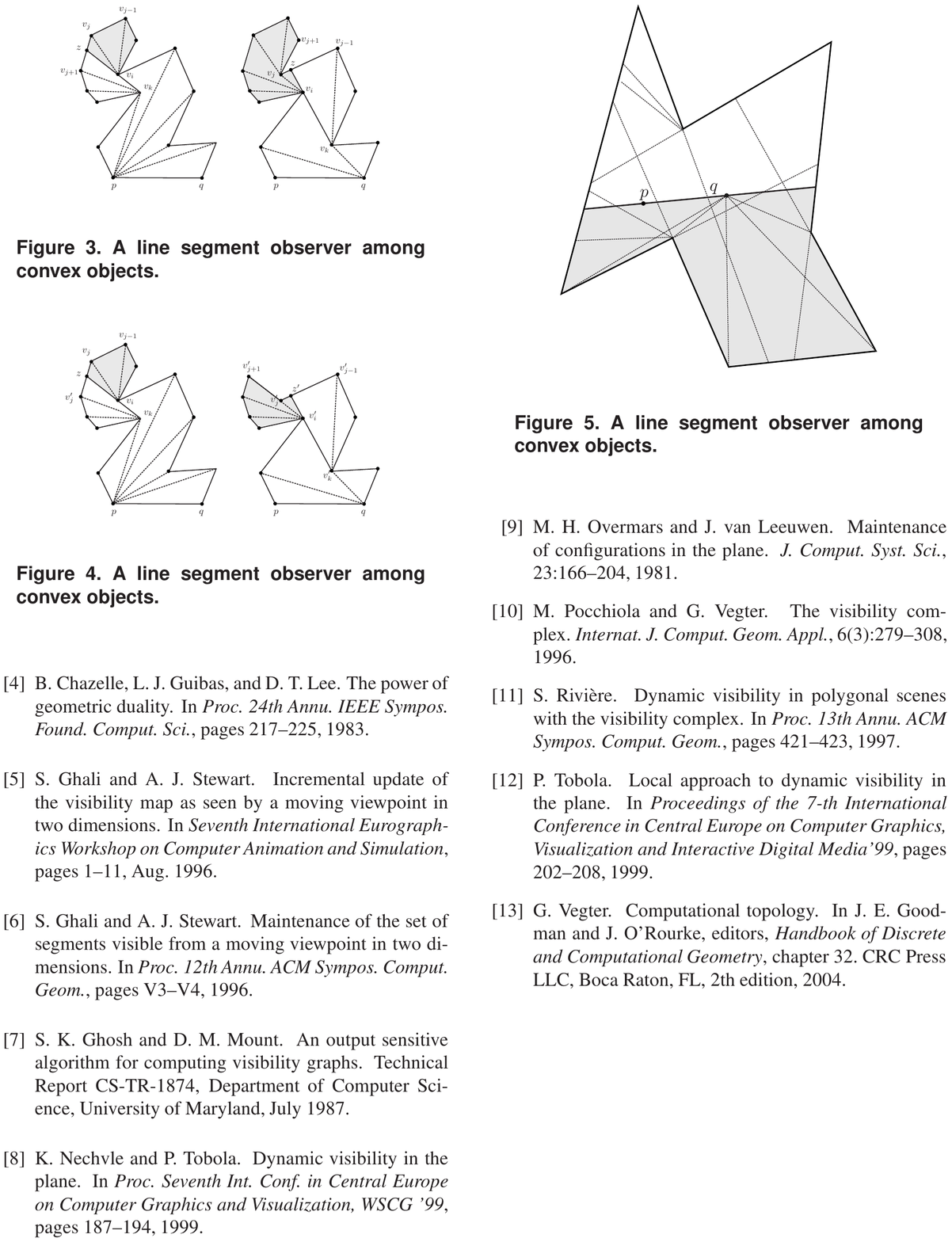} 
  \caption{ The two phases of the algorithm of computing $\WVP(pq)$.
	In the left figure, the shortest path from $p$ to $v_j$ makes a first right turn at $v_i$. In 
	the right figure, the shortest path from $q$ to $v'_j$ makes a first left turn at $v'_i$.}
  \label{fig:guibas}
\end{figure}

\section{Weak visibility queries in simple polygons} \label{sec:first-alg}
In this section, we show how to modify the presented algorithm of Section~\ref{sec:guibas}, 
so that $\WVP$ can be computed efficiently in an output sensitive way.
An important part of this algorithm is computing the shortest path trees. Therefore,
we first show how tom compute these trees in an output sensitive way. Then,
we present a primary version of our algorithm. 
Later, in Section \ref{sec:improve}, we show hot to improve this algorithm.

\subsection{An output sensitive algorithm for computing $\SPT$} \label{sec:spt}
The Euclidean shortest path tree of a point inside a simple polygon of size $n$ 
can be computed in $O(n)$ time \cite{guibas}.
In this section we show how to preprocess $\P$, so that
for any given point $p$ we can report any part of $\SPT(p)$ 
in an output sensitive way.

The shortest path tree $\SPT(p)$ is composed of two
kinds of edges: the {\em primary edges} that connect the root $p$ to its direct visible
vertices, and the {\em secondary edges} that connect two vertices of $\SPT(p)$
(see Figure \ref{fig:pspt}).
We can also recognize two kinds of the secondary edges: a 1st type secondary edge 
(1st type for short) is a secondary edge that is connected to a primary edge, 
and a 2nd type secondary edge (2nd type for short) is a secondary edge that is not connected 
to a primary edge. We show how to store these edges in the preprocessing phase, so that
they can be retrieved efficiently in the query time.

The primary edges of $\SPT(q)$ can be computed 
by using the algorithm of computing the visibility polygons \cite{bose}. 
More precisely, with a preprocessing cost of $O(n^3 \log n)$ time and $O(n^3)$ space, 
a pointer to the sorted list of the visible vertices of a query point $p$ can be 
computed in time $O(\log n)$.

\begin{figure}[h]
  \centering
  \includegraphics[width=1\columnwidth]{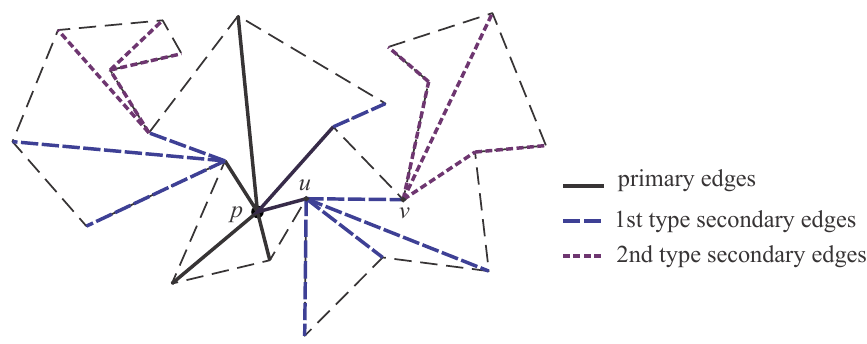}   % this is logo.pdf
  \caption{The shortest path tree of $p$ and its different types of edges:
	the edges that are directly connected to the root $p$ (primary edges),  
	the edges that are connected to the primary edges (1st type secondary edges), 
	and the remaining edges (2nd type secondary edges).}
  \label{fig:pspt}
\end{figure}

For computing the secondary edges of $\SPT$,
in the preprocessing time,
we store all the possible values of the secondary edges of each vertex. 
Having these values, we can detect the appropriate list in the query time
and report the edges without any further cost.

Each vertex $v$ in $\P$ have $O(n)$ potential parents in $\SPT$. For each potential parent
of $v$, it may have $O(n)$ 2nd type edges in $\SPT$. Therefore, for a vertex $v$, 
$O(n^2)$ space is needed to store all the possible combinations of the 2nd type edges. 
Computing and storing these potential edges can be done in $O(n^2 \log n)$ time. 
In the query time, when we arrive at the vertex $v$, we use these 
data to extract the 2nd type edges of $v$ in $\SPT$. Computing these data for all the 
vertices of $\P$ needs $O(n^3 \log n)$ time and $O(n^3)$ space.

The parent of a 1st type edge of $\SPT$ is the root of 
the tree. As the root can be in any of the $O(n^3)$ visibility regions, 
we need to consider all these potential parents to compute
the possible combinations of the 1st type edges of a vertex.
Considering all the regions, the possible first type edges
can be computed in $O(n^4 \log n)$ time and $O(n^4)$ space.

\begin{lemma}
Given a simple polygon $\P$, we can build a data structure of size $O(n^4)$ 
in time $O(n^4 \log n)$, so that for a query point $p$, the 
shortest path tree $\SPT(p)$ can be reported in $O(\log n + k)$ time, where $k$ is 
the size of the tree to be reported.
\end{lemma}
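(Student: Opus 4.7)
The plan is to build the promised data structure by handling the three classes of edges of $\SPT(p)$ separately, using the classification (primary, 1st type secondary, 2nd type secondary) introduced just above the lemma, and then combine them via a single point-location query on the visibility decomposition of $\P$.

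First I would handle the primary edges. Invoking the visibility-polygon preprocessing of Bose \emph{et al.}\ mentioned in the excerpt, I pay $O(n^3 \log n)$ time and $O(n^3)$ space so that, given $p$, a pointer to the cyclically sorted list of vertices visible from $p$ can be returned in $O(\log n)$; walking this list produces the primary edges in time linear in their number. Next I would handle the 2nd type secondary edges. Their key property is that whether an edge $(u,v)$ appears in $\SPT(p)$ as a 2nd type edge depends only on the parent of $v$ in $\SPT(p)$, not on $p$ itself. Hence for each vertex $v$ I enumerate its $O(n)$ potential parents and for each such parent precompute the $O(n)$ 2nd type edges that would hang off $v$; this gives $O(n^2)$ space per vertex and $O(n^3)$ space in total, constructible in $O(n^3 \log n)$ time. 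At query time, once a parent is resolved during the DFS-style traversal of $\SPT(p)$, I look up the corresponding sub-list and splice it in at $O(1)$ cost per reported edge.

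The dominant cost comes from the 1st type secondary edges. These are the edges incident to primary edges, so their presence depends on who the root is — more precisely, on which visibility region contains $p$, since within a region the visibility sequence (and therefore the combinatorial structure of the primary edges) is fixed. I would therefore iterate over all $O(n^3)$ visibility regions of $\P$ and, for each region $R$, store the list of 1st type secondary edges that would appear in $\SPT(p)$ for any $p\in R$. Since there are $O(n)$ such edges per region, the storage is $O(n^4)$ and the construction cost is $O(n^4 \log n)$, matching the bounds claimed.

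To answer a query I would first perform a point location of $p$ in the visibility decomposition in $O(\log n)$ time to retrieve the identifier of the region containing $p$; this simultaneously gives me the pointer into the primary-edge list and the pointer into the 1st-type-secondary list for that region. I then assemble $\SPT(p)$ by a DFS: at each visited vertex $v$ I output the primary edge bringing me to $v$ (if any), then the 1st type edges attached to it using the region's list, and then, recursively for each new vertex, its 2nd type edges using the per-vertex table indexed by the parent. Each reported edge contributes only $O(1)$ work, giving query time $O(\log n + k)$. The main obstacle I expect is verifying that the list of 1st type edges really is a combinatorial invariant of the visibility region — this follows because inside a single region the visibility sequence of $p$ does not change, so the set of primary edges (and hence which secondary edges are ``first type'') is identical — but it is worth making explicit before totaling the bounds.
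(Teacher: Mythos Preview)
Your proposal is correct and follows essentially the same approach as the paper: the paper likewise handles the three edge classes separately, uses the Bose \emph{et al.} visibility preprocessing for the primary edges, stores for every vertex and every potential parent the corresponding 2nd type edges (giving $O(n^3)$ space and $O(n^3\log n)$ time), and stores the 1st type edges for each of the $O(n^3)$ visibility regions to obtain the dominant $O(n^4)$ space and $O(n^4\log n)$ time. Your added remark that the 1st type edge list is a combinatorial invariant of the visibility region is exactly the implicit assumption the paper relies on.
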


In Section~\ref{sec:improve} we will show how to improve
the processing time and space by a linear factor.

\subsection{Computing the query version of $\WVP$} \label{sec:wvp}
In this section, we use the linear algorithm presented in Section \ref{sec:guibas} for 
computing $\WVP$ of a simple polygon. This algorithm is not output sensitive by itself.
See the example of Figure~\ref{g1}. As stated in Section \ref{sec:guibas}, to compute $\WVP(pq)$,
first we traverse $\SPT(p)$ using DFS and we check the turn at every vertex of $\SPT(p)$.
Consider vertex $v$. As we traverse the shortest path $SP(p, v)$, 
all the children of $v$ must be checked. This can cost $O(n)$ time.
When we traverse $\SPT(q)$, a sub-polygon with $v$ as its vertex will be omitted. 
Therefore, the time spent on processing the children of $v$ in $\SPT(p)$ is redundant.

\begin{figure}[h]
  \centering
  \includegraphics[width=.7\columnwidth]{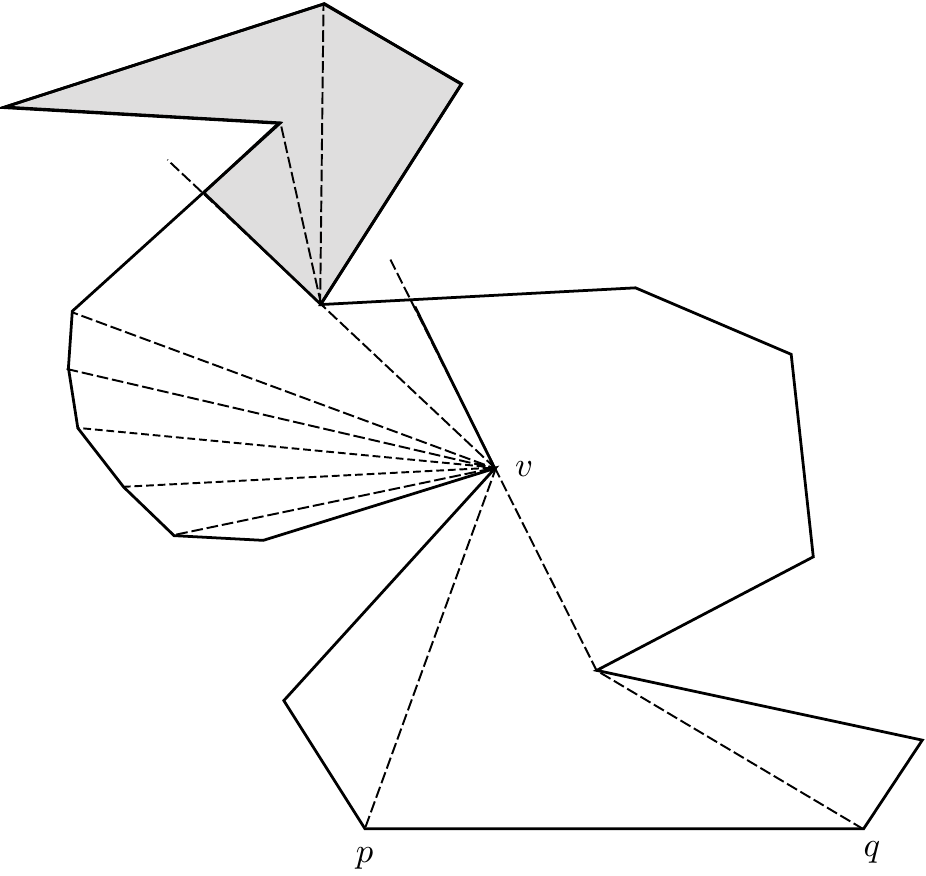} 
  \caption{In the first phase of the algorithm, all the
  children of $v$ in $\SPT(p)$ are processed. These vertices which are not 
  in $\WVP(pq)$ impose redundant $O(n)$ time.} 
  \label{g1}
\end{figure}

To achieve an output sensitive algorithm, 
we build the data structure explained in the previous 
section, so that $\SPT$ of any point inside the polygon can be computed in the query time.
Also, we store some additional information about the
vertices of the polygon in the preprocessing phase.
We say that a vertex $v$ of a simple polygon is {\em left critical} (LC for short) with respect to
a point $p$, if $\SP(p,v)$ makes its first left turn at $v$. In other words, 
each shortest path from $p$ to a non-LC vertex is a convex chain that 
makes only clockwise turns at each node. 
The {\em critical state} of a vertex is whether or not it is LC. 
If we have the critical state of all the vertices of the polygon with respect to a 
point $p$, we say that we have the {\em critical information} of $p$.

The idea is to change the algorithm of Section \ref{sec:guibas} and
make it output sensitive. The outline of the algorithm is
as follows: In the first round, we traverse $\SPT(p)$ using
DFS. At each vertex, we check whether this vertex is left
critical with respect to $q$ or not. If so, we are sure that the
descendants of this vertex are not visible from $pq$, so we
postpone its processing to the time it is reached from $q$,
and check other branches of $\SPT(p)$. Otherwise, we proceed
with the algorithm and check whether $\SPT(p)$ makes a
right turn at this vertex. In the second round,  we traverse $\SPT(q)$ and 
perform the normal procedure of the algorithm.

\begin{lemma} \label{lemma3}
All the traversed vertices in $\SPT(p)$ and $\SPT(q)$ are vertices of $\WVP(pq)$.
\end{lemma}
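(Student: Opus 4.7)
The plan is to prove the lemma by induction on the DFS order in each of the two rounds, appealing to the correctness of the Guibas {\em et al.}\ algorithm recalled in Section~\ref{sec:guibas}. That algorithm characterizes $\WVP(pq)$ as the subpolygon that remains after excising, first, the right-turn cuts produced while traversing $\SPT(p)$ and, second, the left-turn cuts produced while traversing $\SPT(q)$. Consequently, it suffices to show that (a) every traversed vertex survives both cutting phases, and (b) the subtrees that our modified DFS prunes lie entirely outside $\WVP(pq)$, so that no relevant vertex is missed.

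For round~1, the DFS over $\SPT(p)$ halts descent at any vertex that is LC with respect to $q$. I would prove by induction on DFS depth the following invariant: for every visited vertex $v$, each strict ancestor of $v$ in $\SPT(p)$ is non-LC with respect to $q$, so that $\SP(q,v)$ makes no left turn strictly before $v$. The invariant is preserved because the DFS never descends past an LC-w.r.t.-$q$ vertex. It guarantees that $v$ is not excised by any phase-2 cut originating at a proper ancestor in $\SPT(q)$; while if $v$ itself happens to be LC with respect to $q$, then $v$ is an endpoint of the cut segment inserted at $v$ and so lies on the boundary of $\WVP(pq)$ by construction. Survival of $v$ through the phase-1 cuts is immediate from the standard Guibas analysis, since a DFS that inserts right-turn cuts on the fly does not enter the excised subtree. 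Combining the two observations yields $v \in \WVP(pq)$.

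Round~2 performs the standard Guibas procedure on $\SPT(q)$, and the analogous invariant holds by the symmetric argument, exchanging the roles of $p$ and $q$ and of left and right turns. The step that I expect to require the most care is the geometric claim underlying the safe pruning of round~1: when the DFS halts at an LC-w.r.t.-$q$ vertex $u$, every $\SPT(p)$-descendant of $u$ in fact falls within the range of polygon-boundary indices excised by the Guibas cut inserted at $u$ during phase~2. Establishing this cleanly involves comparing the structure of $\SPT(p)$ and $\SPT(q)$ near $u$ and invoking the largest-index rule that selects the cut endpoint $z$ in the Guibas algorithm; this is the true geometric content of the lemma, with the two-round induction merely serving as its scaffolding.
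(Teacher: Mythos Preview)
Your part~(a) is the paper's argument, which the paper compresses to three lines of contradiction: if a traversed vertex $v$ were not in $\WVP(pq)$, then its $\SPT(p)$-parent $u$ or some ancestor of $u$ would have to be LC with respect to $q$ (otherwise the Guibas procedure would retain $v$ as a $\WVP$ vertex), and the round-1 DFS would then have stopped before reaching $v$. The paper sets up no explicit induction or invariant and does not treat round~2 separately.

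Part~(b) is not part of this lemma. The statement asserts only that every traversed vertex lies in $\WVP(pq)$; it says nothing about whether all of $\WVP(pq)$ is reached. Completeness of the traversal matters for correctness of the overall algorithm, but it is not what is being proved here, so the ``step requiring the most care'' you flag at the end is outside the lemma's scope. If anything, the genuine geometric content already sits inside~(a), at your transition ``each strict ancestor of $v$ in $\SPT(p)$ is non-LC with respect to $q$, so that $\SP(q,v)$ makes no left turn strictly before $v$'': ancestors in $\SPT(p)$ and in $\SPT(q)$ need not coincide, and bridging the two uses the funnel structure of $\SP(p,v)\cup\SP(q,v)$ (the $q$-side of the funnel up to the apex has no left turns, so any first left turn on $\SP(q,v)$ occurs at or after the apex and hence at an $\SPT(p)$-ancestor of $v$). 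The paper's proof glosses over exactly the same point, so your argument is at the same level of rigor---just longer and carrying an unnecessary converse.
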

\begin{proof}
Assume that when we are traversing $\SPT(p)$, we meet $v$ and $v \not\in \WVP(pq)$.
Let $u$ be the parent of $v$ in $SP(pv)$. In this case, $u$ or one of its ancestors 
must be LC with respect to $q$, otherwise the algorithm will detect it as a $\WVP$ vertex. 
Therefore, $v$ cannot be seen while traversing $\SPT(p)$. The same argument can be applied to $\SPT(q)$.
\end{proof}

In the preprocessing phase, we compute the critical
information of a point inside each region, and assign this
information to that region.
In the query time and upon receiving a line segment $pq$, 
we find the regions of $p$ and $q$.
Using the critical information of these two regions, 
we can apply the algorithm and compute $\WVP(pq)$.

As there are $O(n^3)$ regions in the visibility decomposition, 
$O(n^4)$ space is needed to store the critical information of all the vertices. 
For each region, we compute $\SPT$ of a point, 
and by traversing the tree, we update the critical
information of each vertex with respect to that region.  
For each region, we assign an array of size $O(n)$ to store these information.
We also build the structure described 
in Section \ref{sec:spt} for computing $\SPT$ in time $O(n^4 \log n)$ and $O(n^4)$ space.
In the query time, we locate the visibility regions of $p$ and $q$ in $O(\log n)$ time. 
As the processing time spent in each vertex is $O(1)$, by Lemma \ref{lemma3}, 
the query time is $O(\log n + |WVP(pq)|)$.

\begin{lemma} \label{lem:primaryresult}
Using $O(n^4 \log n)$ time to preprocess a simple polygon $\P$ and construct a 
data structure of size $O(n^4)$, it is possible to report $WVP(pq)$ in $O(\log n + |WVP(pq)|)$ time.
\end{lemma}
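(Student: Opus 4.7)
The plan is to combine three ingredients: the output-sensitive $\SPT$ data structure of the previous lemma, a point-location structure on the visibility decomposition of $\P$, and a table of precomputed critical information indexed by visibility region. First I would invoke the preceding lemma to build, in $O(n^4 \log n)$ time and $O(n^4)$ space, a structure that reports any requested portion of $\SPT(x)$ in time $O(\log n + k)$ where $k$ is the portion's size.

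Next I would compute the visibility decomposition of $\P$, which has $O(n^3)$ regions, and overlay a standard planar point-location structure of $O(n^3)$ space supporting $O(\log n)$ queries. For each region $R$ I would pick a witness point $x_R \in R$, compute $\SPT(x_R)$ once, traverse it, and record an $O(n)$-array of LC bits storing the critical state of every vertex of $\P$ with respect to $x_R$. Since critical state is determined by the visibility sequence, every point of $R$ inherits the same information as $x_R$. Summing over the $O(n^3)$ regions, this stage adds $O(n^4)$ time and space, which is absorbed into the bounds inherited from the $\SPT$ preprocessing.

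At query time, given $pq$, I would locate the regions containing $p$ and $q$ in $O(\log n)$ and fetch their two critical-information arrays. Then I would execute the modified two-phase Guibas procedure of Section~\ref{sec:wvp}: expand $\SPT(p)$ via DFS on demand using the $\SPT$ structure, and at each vertex consult the critical array of $q$ to decide whether to descend or to postpone that subtree, performing the right-turn cut from Section~\ref{sec:guibas} when appropriate; then do the symmetric phase starting from $q$. Each visited vertex requires only $O(1)$ work beyond the $\SPT$ access, and by Lemma~\ref{lemma3} every vertex touched in either DFS already belongs to $\WVP(pq)$. Adding the $O(\log n)$ setup cost yields the claimed $O(\log n + |WVP(pq)|)$ query time.

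The subtle point, and the step I expect to require the most care, is verifying that the $\SPT$ structure of the previous lemma really supports \emph{incremental, output-sensitive} DFS expansion rather than only reporting a whole tree, so that subtrees we postpone at LC vertices impose no cost. Provided the stored lists of primary and secondary edges incident to each vertex can be accessed locally in $O(1)$ per edge after an $O(\log n)$ initial lookup into the region's data, this follows from the construction used to prove the preceding lemma, and the stated bounds fall out directly.
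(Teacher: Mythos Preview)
Your proposal is correct and follows essentially the same route as the paper: build the $\SPT$ structure of the preceding lemma, precompute and store an $O(n)$ array of critical (LC) information for each of the $O(n^3)$ visibility regions, locate the regions of $p$ and $q$ in $O(\log n)$, and run the modified two-phase DFS with LC checks, invoking Lemma~\ref{lemma3} to bound the number of visited vertices by $|\WVP(pq)|$. The only notable difference is that you are more explicit about the point-location structure and about the requirement that the $\SPT$ data structure support incremental, per-edge DFS expansion; the paper leaves this implicit in its description of how primary and secondary edges are stored and accessed.
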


Until now, we assumed that $pq$ is a polygonal edge.
This can be generalized for any $pq$ in $\P$. 

\begin{lemma} \label{lemma3.1}
Let $pq$ be a line segment inside a simple polygon $\P$. 
We can decompose $\P$ into two sub-polygons $\P_1$ and $\P_2$, such that
each sub-polygon has $pq$ as a polygonal edge. 
Furthermore, the critical information of $\P_1$ and $\P_2$ 
can be computed from the critical information of $\P$.
\end{lemma}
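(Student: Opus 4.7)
The plan is to use $pq$ as a chord that splits $\P$ into two sub-polygons $\P_1$ and $\P_2$, each having $pq$ as one of its polygonal edges. Combinatorially this is immediate: walking along $\partial\P$ from one endpoint of $pq$ to the other in each of the two directions and closing the boundary by appending the segment $pq$ produces $\P_1$ and $\P_2$; if $p$ or $q$ lies in the interior of $\P$, it simply becomes a new reflex vertex of both sub-polygons.

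The key observation for transferring the critical information is a locality property of shortest paths: for any two points $a,b\in\P_i$, the shortest path from $a$ to $b$ inside $\P$ stays entirely inside $\P_i$. This follows from a standard shortcut argument: if such a path crossed $pq$ at two points $x,y$ and detoured through $\P_{3-i}$ in between, the arc from $x$ to $y$ could be replaced by the straight sub-segment of $pq$ joining them, a substitute that lies in $\overline{\P_i}\subset\P$ and is strictly shorter, contradicting optimality. Consequently the shortest path in $\P$ between a source $s\in\P_i$ and any vertex $v$ of $\P_i$ coincides with the shortest path computed in $\P_i$, so the sequence of turns -- and their directions -- at each intermediate vertex is identical in the two polygons.

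This makes the critical information of $\P_i$ available from that of $\P$ with essentially no additional computation. For every vertex of $\P$ that also lies in $\P_i$, its left-critical state with respect to a source $s\in\P_i$ is the same whether measured in $\P_i$ or in $\P$, so it is simply copied from the stored information. The only truly new vertices of $\P_i$ are $p$ and $q$ themselves, and since $pq$ is a polygonal edge of $\P_i$ the shortest path between them is $pq$ with no intermediate turn, so neither endpoint is left-critical with respect to the other. The main technical point to watch will be the degenerate configurations -- $p$ or $q$ falling on $\partial\P$ or coinciding with an existing vertex, $pq$ grazing a vertex of $\P$, etc. -- which I would handle by a standard general-position assumption or symbolic perturbation; these do not affect the shortcut argument at the heart of the proof.
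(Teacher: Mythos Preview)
Your shortest-path shortcut argument for transferring critical information is sound once a valid decomposition exists, but the decomposition step itself has a genuine gap. You propose to ``use $pq$ as a chord'' and walk along $\partial\P$ from one endpoint of $pq$ to the other; this presupposes that both $p$ and $q$ lie on $\partial\P$. In the setting of this lemma the query segment is an arbitrary segment inside $\P$, so typically $p$ and $q$ are interior points, and then $pq$ does not separate $\P$ at all. Your remark that an interior endpoint ``simply becomes a new reflex vertex of both sub-polygons'' does not repair this: if $p$ is interior, cutting along $pq$ produces a single polygon with a slit whose two sides remain connected around $p$, not two disjoint sub-polygons $\P_1,\P_2$ that partition $\P$ and both carry $pq$ as an edge.

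The paper avoids this by intersecting the \emph{supporting line} of $pq$ with $\partial\P$ and splitting $\P$ along that line, so the separating chord has its endpoints on $\partial\P$ while still containing $pq$. With that corrected split, your shortcut argument goes through unchanged (any path from $\P_i$ that crosses the straight separating chord twice can be shortened along it), and one recovers the conclusion that the left-critical status of each vertex of $\P_i$ with respect to a source in $\P_i$ agrees with its status in $\P$. The paper phrases the transfer slightly differently, observing that the visibility regions of $\P_1$ and $\P_2$ sit inside those of $\P$ so the precomputed per-region data can be reused directly; your geodesic-locality argument reaches the same end, but only after the decomposition is fixed as above.
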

\begin{proof}
We find the intersection points of the supporting line of $pq$ with the
border of $\P$. Then, we split $\P$ into two simple polygons $\P_1$ and $\P_2$, 
both having $pq$ as a polygonal edge.
The visibility regions of $\P_1$ and $\P_2$ are subsets of the visibility regions of 
$\P$. Therefore, we have the critical information and $\SPT$ edges of these regions.
The primary edges of $p$ and $q$ can also be divided to those in $\P_1$ and those
in $\P_2$.
See the example of Figure \ref{fig:split}. 

\begin{figure}[h]
  \centering
  \includegraphics[width=.6\columnwidth]{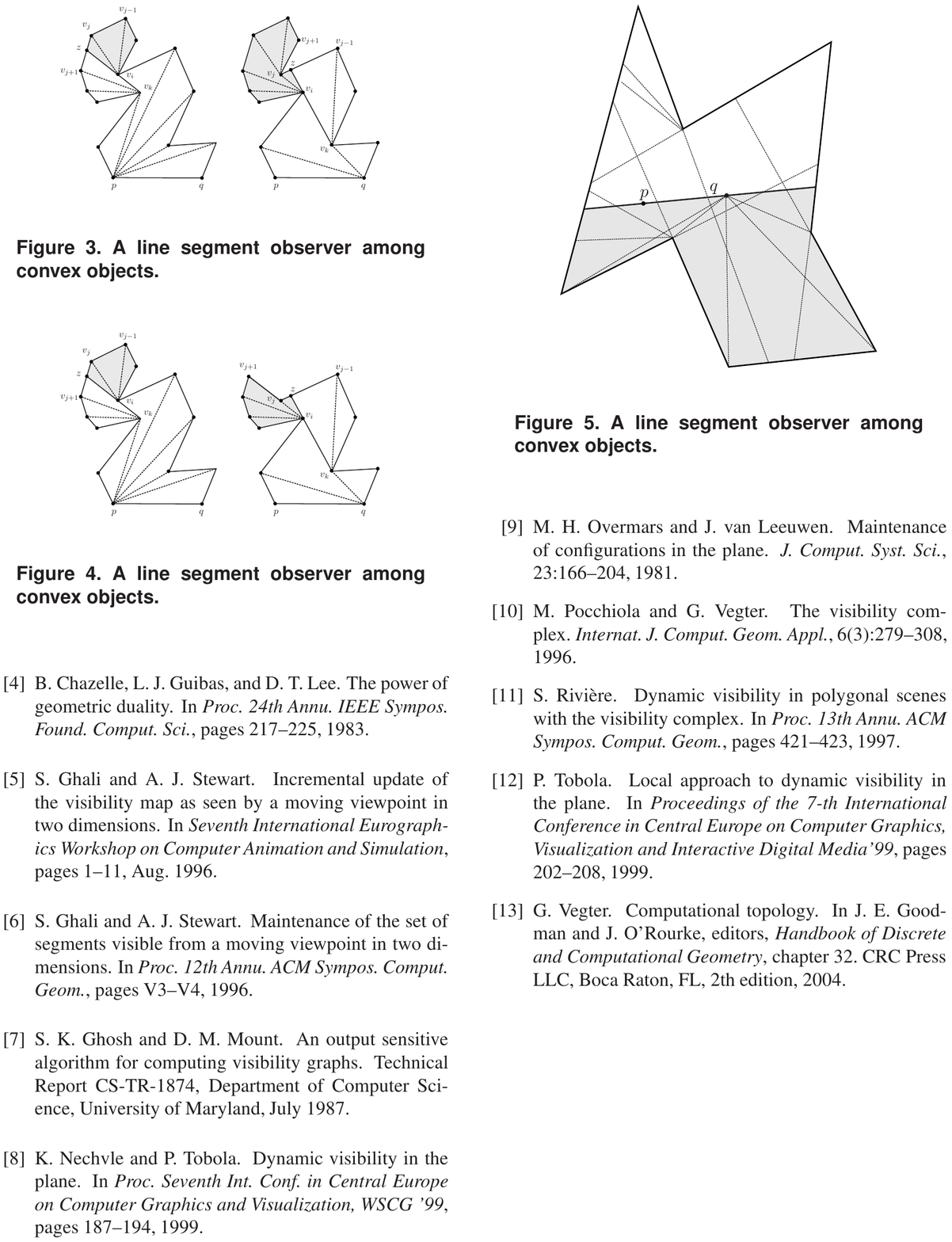}   
  \caption{If the query line segment $pq$ is inside the polygon, we split it along the supporting line of $pq$
  to create two sub-problems. 
  The dotted lines are some of the critical constraints in the polygon. }
  \label{fig:split}
\end{figure}
\end{proof}

\subsection{Improving the algorithm} \label{sec:improve}
In this section we improve the preprocessing cost of Lemma \ref{lem:primaryresult}.
To do this, we improve the parts of the algorithm of Section \ref{sec:wvp} 
that need $O(n^4 \log n)$ preprocessing time and $O(n^4)$ space.
We show that it is sufficient to compute the critical information and the 1st type edges 
of the sink regions (see Section \ref{sec:pre:decompos} for the definition of the 
sink regions). 
For any query point $p$ in a non-sink region, the 1st type edges of $\SPT(p)$ 
can be computed from the 1st type edges 
of the sink regions (Lemma \ref{lem:lemma4}).
Also, the critical information of the other regions
can be deduced from the critical information of the sink regions (Lemma \ref{lem:lemma5}).
As there are $O(n^2)$ sinks in a simple polygon, the processing
time and space of our algorithm will be reduced to $O(n^3 \log n)$ and $O(n^3)$, respectively.

In the query time, if both $p$ and $q$ belong to the sink regions, we have the 
critical information of their regions and we can proceed the algorithm. 
On the other hand, if one of these points is on a non-sink region, 
Lemma~\ref{lem:lemma4} and \ref{lem:lemma5} show that the secondary edges and 
the critical information of that point can be retrieved in $O(\log n + |WVP(pq)|)$ time.

\begin{lemma} \label{lem:lemma4}
Assume that, for a visibility region $V$, the 1st type secondary edges are computed
For a neighboring region that share a common edge with $V$, these edges can be updated
in constant time.
\end{lemma}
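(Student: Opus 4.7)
The plan uses the standard fact recalled in the preliminaries that any two neighboring visibility regions $V$ and $V'$ separated by a single critical constraint edge $\ell$ have visibility sequences that differ in exactly one vertex of $\P$; call this vertex $w$. I would take this as the starting observation.

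Next I would describe the geometric change. The critical constraint $\ell$ passes through two polygon vertices, one of which (call it $u$) acts as the reflex vertex whose alignment with $w$ causes $w$ to start or stop being visible as the viewing point crosses $\ell$. Consequently, moving the query point from $V$ to $V'$ flips the visibility of $w$, and reroutes the parent of $w$ in $\SPT$ between the root $p$ (primary edge $(p,w)$, when $w$ is visible) and $u$ (secondary edge $(u,w)$, when $w$ is occluded). The parent pointer of every other vertex is unchanged, because shortest paths to those vertices depend only on the visibility of vertices whose status is not affected by crossing $\ell$.

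I would then classify how the set of 1st type secondary edges can change. Since a 1st type edge is an $\SPT$ edge one of whose endpoints is a visible (primary) vertex, the only edges whose classification can shift are those incident to a vertex whose visible/non-visible status flips, i.e., incident to $w$. This gives two cases: (i) the incoming edge at $w$, which is the primary edge $(p,w)$ in one region and the secondary edge $(u,w)$ in the other, and (ii) the outgoing edges from $w$ to its $\SPT$ children, whose type toggles between 1st type and 2nd type according as $w$ is primary or not. All other 1st type edges are unaffected.

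The main obstacle is case (ii), since $w$ may have many $\SPT$ children and a literal list-based update would take $\Theta(n)$ time. The resolution is to maintain the 1st type edges implicitly rather than by an explicit enumerated list: for every vertex we keep its parent pointer in $\SPT$ together with a boolean indicating whether it is currently a primary vertex, and the type of any edge $(v,c)$ is then read off from $v$'s flag. With this representation, the transition from $V$ to $V'$ reduces to flipping the single flag at $w$ and rerouting the single parent pointer of $w$ between $p$ and $u$, both $O(1)$ operations, after which the set of 1st type secondary edges is correctly described for $V'$. This is the content of the lemma.
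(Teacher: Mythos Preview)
Your proposal is correct and follows essentially the same line as the paper's proof: both identify that crossing the critical constraint determined by $u$ and $w$ flips the visibility of exactly one vertex $w$, so the only affected edges are the incoming edge at $w$ (toggling between the primary edge $pw$ and the 1st type edge $uw$) and the block of outgoing edges from $w$ to its children (toggling between 1st and 2nd type). The only difference is the $O(1)$ mechanism for handling the children of $w$: the paper keeps the children of $w$ as a precomputed array and moves a single array pointer between the 2nd-type list attached to $uw$ and the 1st-type list attached to $pw$, whereas you flip a single ``primary'' flag at $w$ and read edge types implicitly from the parent's flag. Both are legitimate constant-time updates resting on the same geometric observation; the paper's pointer-move variant has the mild advantage that the children of each primary vertex remain directly enumerable during the later DFS traversal of $\SPT(p)$, which your flag-only representation would need to be paired with the already-stored child arrays to support.
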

\begin{proof}
\begin{figure}[h]
  \centering
  \includegraphics[width=0.65\columnwidth]{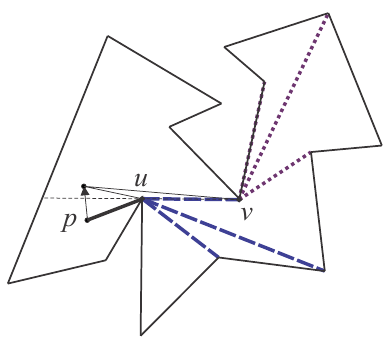}   % this is logo.pdf
  \caption{When $p$ enters a new visibility region, 
  the combinatorial structure of $\SPT(p)$ can be maintained in constant time.}
  \label{fig:h4}
\end{figure}
When a view point $p$ crosses the common border of two neighboring regions, 
a vertex becomes visible or invisible \cite{bose} to $p$. 
In Figure \ref{fig:h4}, for example, when $p$ crosses the
border specified by $u$ and $v$, a 1st type edge of $u$
becomes a primary edge of $p$, and all the edges of $v$ become the 1st type edges.
We can see that no other vertex is affected by this movement.
Processing these changes
can be done in constant time, since it includes the following changes:
removing a secondary edge of $u$ ($uv$), adding a primary edge ($pv$), 
and moving an array pointer (edges of $v$) from the 2nd type edges of $uv$ to the 1st 
type edges of $pv$. Note that we know the exact positions of these elements in their corresponding 
lists. The only edge that involves in these changes (i.e., the edge 
corresponding to the crossed critical constraint), can be identified in the
preprocessing time. Therefore, the time we spent in the query time would be $O(1)$.
\end{proof}

\begin{lemma} \label{lem:lemma5}
The critical information of a point can be maintained
between two neighboring region that share a common edge in constant time.
\end{lemma}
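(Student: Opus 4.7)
The plan is to piggyback on the local update argument underlying Lemma \ref{lem:lemma4}. When $p$ crosses the common edge separating two neighboring visibility regions, a single vertex $v$ in $\SPT(p)$ reassigns its parent: on one side $v$ has some vertex $u$ as its parent (with $uv$ a 1st type secondary edge), while on the other side $v$ becomes directly visible from $p$ and $pv$ becomes a primary edge. Every other parent pointer, and hence the tree structure everywhere outside of the single edge feeding $v$, is preserved.

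My first step would be to argue that the critical state of every vertex $w \neq v$ is unchanged across the crossing. For $w$ outside the subtree rooted at $v$, the path $\SP(p, w)$ is literally identical in the two regions, so there is nothing to check. For $w$ inside the subtree of $v$, the tree structure from $v$ downward is preserved, hence the turns at all vertices strictly between $v$ and $w$ (and at $w$ itself) remain the same. Moreover, because the critical constraint being crossed is the line through $u$ and $v$, at the crossing instant $p$, $u$, $v$ are collinear, so the incoming direction into $v$ --- either from $u$ or directly from $p$ --- lies along the same line, and the orientation of the turn formed at $v$ with its outgoing tree edge does not combinatorially flip. Combining these, the ``first left turn'' position along $\SP(p, w)$ is the same in the two regions.

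This leaves only the bit at $v$ itself, which may genuinely need to be recomputed because the prefix $p \to \cdots \to u \to v$ collapses to $pv$ and any left turns on that prefix disappear. Recomputing this single bit amounts to checking the orientation of the outgoing edge of $v$ in $\SPT(p)$ against the new incoming edge $pv$, which is an $O(1)$ orientation test on pointers that the preprocessing of Lemma \ref{lem:lemma4} has already set up. The step I expect to need the most careful justification is precisely the combinatorial stability of the turn at $v$, together with ruling out any cascade in the first-left-turn position for descendants of $v$ that might be induced by the collapsing prefix. Although this follows from the collinearity of $p$, $u$, $v$ on the crossed critical constraint and a general-position assumption on $v$'s outgoing edge, it needs to be stated cleanly using a single orientation predicate; with that in hand, the update reduces to the single constant-time test above and the lemma follows.
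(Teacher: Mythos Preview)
Your argument has a genuine gap at exactly the point you flag as needing the most care: the cascade in the subtree of $v$ cannot be ruled out.

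The collinearity of $p$, $u$, $v$ at the crossing instant does show that the turn \emph{at $v$} (incoming edge into $v$ versus outgoing edge to a child) is combinatorially stable across the crossing. But it says nothing about the turn \emph{at $u$} on the ``before'' side, where the path to $v$ is $p \to u \to v$. That turn has a fixed nonzero sign throughout the ``before'' region, and it may perfectly well be a left turn; this is precisely the fourth configuration in Figure~\ref{fig:e2-1}. When it is, that left turn at $u$ lies on $\SP(p,w)$ for \emph{every} descendant $w$ of $v$. After the cross, $u$ drops off the path and that left turn vanishes, so the critical status of $v$ and of all its descendants can change simultaneously. A single $O(1)$ orientation test at $v$ does not capture this; your claim that ``the `first left turn' position along $\SP(p,w)$ is the same in the two regions'' is simply false in this case.

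The paper handles the cascade by changing the stored invariant rather than by arguing it away. Instead of a single LC bit per vertex, it stores a \emph{critical number} (the count of left-turn vertices met along $\SP(p,v)$) together with a \emph{debit number} for lazy propagation. When the left turn at $u$ disappears, one adjusts $v$'s debit number by $\pm 1$ instead of touching every descendant; the pending adjustment is pushed down only when the subtree is actually traversed at query time. This lazy-counter trick is the missing idea in your proposal, and without it (or an equivalent mechanism) the update is not constant time in the fourth case.
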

\begin{proof}
\begin{figure}	
	\centering
	\begin{subfigure}[b]{0.23\textwidth}
		\centering
		\includegraphics[width=1\columnwidth]{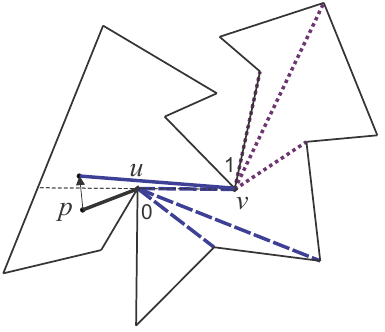}
		\caption{$v$ is LC but not $u$}\label{fig:e2-1a}		
	\end{subfigure}
	\begin{subfigure}[b]{0.23\textwidth}
		\centering
		\includegraphics[width=1\columnwidth]{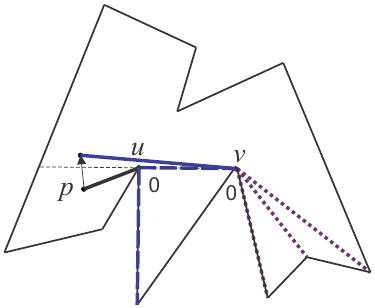}
		\caption{$u$ and $v$ are not LC}\label{fig:e2-1b}		
	\end{subfigure}
	\\
	\begin{subfigure}[b]{0.23\textwidth}
		\centering
		\includegraphics[width=1\columnwidth]{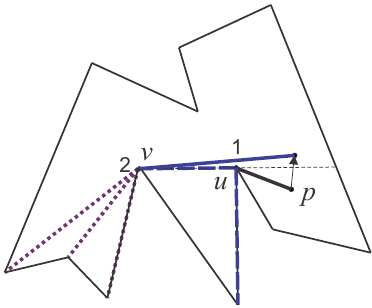}
		\caption{both $u$ and $v$ are LC}\label{fig:e2-1c}		
	\end{subfigure}
	\begin{subfigure}[b]{0.23\textwidth}
		\centering
		\includegraphics[width=1\columnwidth]{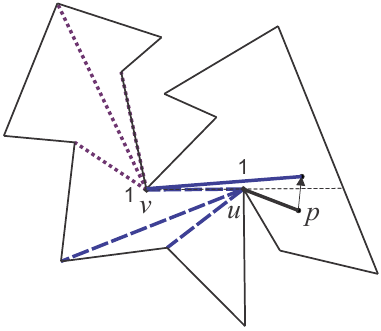}
		\caption{$u$ is LC but not $v$}\label{fig:e2-1d}		
	\end{subfigure}
  \caption{Changes in the critical information of $v$ w.r.t $p$, as $p$ moves between the two regions.}
  \label{fig:e2-1}
\end{figure}

Suppose that we want to maintain the critical
information of $p$, and $p$ is crossing the critical constraint
defined by $uv$. Here, $u$ and $v$ are two 
reflex vertices of $\P$. The only vertices that affect directly by this change
are $u$ and $v$. Depending on the critical states of $u$ and $v$
w.r.t.\ $p$, four situations may occur (see Figure \ref{fig:e2-1}). In the
first three cases, the critical state of $v$ will not change. In
the forth case, however, the critical state of $v$ will change.
Before the cross, the shortest path $\SP(p,v)$ makes a left turn at $u$,
therefore, both $u$ and $v$ are LC w.r.t.\ $p$. However, after the cross, 
$u$ is not on $\SP(p,v)$ and $v$ is no longer LC. 
This means that the critical state of all the children of 
$v$ in $\SPT(p)$ could be changed as well.

To handle these cases, we modify the way the critical information of each vertex 
w.r.t.\ $p$ are stored. At each vertex $v$, we store two additional values: the number 
of LC vertices we met in the path $SP(p,v)$ (including $v$), or its {\em critical number}, 
and {\em debit number}, which is the critical number that is to be propagated in the 
subtree of the vertex. If a vertex is LC, it means that its critical number 
is greater than zero (see Figure \ref{fig:critical-numbers}). Also, if a vertex has 
a non-zero debit number, the critical numbers of all its children must be added by this
number. Computing and storing these additional numbers along the critical information 
will not change the time and space requirements.

Now let us consider the forth case in Figure \ref{fig:e2-1}. When $v$ becomes visible to $p$,
it is no longer LC w.r.t.\ $p$. Therefore, the critical number of $v$ must be changed to $0$,
and the critical number of all the descendants of $v$ in $\SPT(p)$ must be decreased by one.
However, instead of changing the critical numbers of the descendants of $v$, 
we decrease the debit number of $v$ by one, indicating that the critical numbers of 
its descendants in $\SPT(p)$ must be subtracted by one. The actual propagation 
will happen at the query time when we traverse $\SPT(p)$. 
If $p$ moves in the reverse path, i.e., when $v$ becomes 
invisible to $p$, we handle the tree in the same way by adding
$1$ to its debit number, and propagating this addition in
the query time.

\begin{figure}[h]
  \centering
  \includegraphics[width=.9\columnwidth]{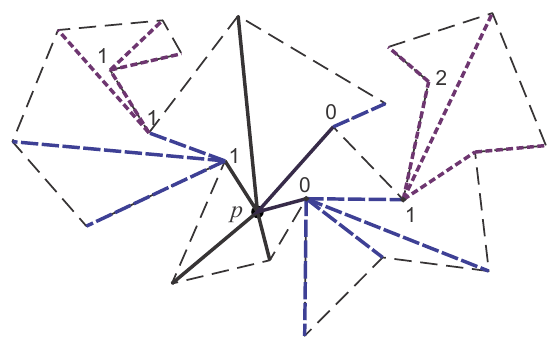}  
  \caption{
The numbers represent the number of left critical vertices met from p in $\SPT(p)$.}
  \label{fig:critical-numbers}
\end{figure}
\end{proof}

In the preprocessing time, we build the dual directed graph of the visibility 
regions. In this graph, every node represents a visibility 
region, and an edge between two nodes corresponds to a gain of one vertex 
in the visibility set in one direction, and a loss in the other direction.
We compute the critical information and 1st type edges of all 
the sink regions. By Lemma \ref{lem:lemma4} and \ref{lem:lemma5}, any two neighboring regions
have the same critical information and secondary edges, except at one vertex. 
We associated this vertex with the edge. 

In the query time, we locate the region containing the point $p$, and follow any path 
from this region to a sink. As each arc represents a vertex that is visible to
$p$, and therefore to $pq$, the number of arcs in the path is 
$O(|WVP(pq)|)$. When traversing the path from the sink back to the region of $p$, 
we update the critical information and the secondary edges of the visible vertices 
in each region. At the original region, we would have the critical information
and the 1st type edges of this region. 
We perform the same procedure for $q$.
Having the critical information and the 1st type edges of $p$ and $q$, we can 
compute $\WVP(pq)$ with the algorithm of Section \ref{sec:wvp}. 
In general, we have the following theorem:

\begin{theorem}
\label{theom:weak_in_simple}
A simple polygon $\P$ can be preprocessed in $O(n^3 \log n)$ time and $O(n^3)$ space such
that given an arbitrary query line segment inside the polygon, 
$\WVP(pq)$ can be computed in $O(\log n + |WVP(pq)|)$ time.
\end{theorem}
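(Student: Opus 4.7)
The plan is to combine the preprocessing improvements developed in Section~\ref{sec:improve} with the output-sensitive query procedure of Section~\ref{sec:wvp}, and then argue that the extra work spent reaching the precomputed sink data is always dominated by the output size $|\WVP(pq)|$.

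In the preprocessing phase I would first construct the visibility decomposition of $\P$, identify the $O(n^2)$ sink regions, and build the dual directed graph whose arcs are each labelled with the unique vertex of $\P$ that is gained or lost between the two neighbouring regions. At every sink I would store the critical information (with the critical and debit numbers introduced in Lemma~\ref{lem:lemma5}) together with the lists of 1st-type secondary edges of the $\SPT$ rooted in that region. Since there are only $O(n^2)$ sinks and each requires $O(n \log n)$ work and $O(n)$ space, this stays within $O(n^3 \log n)$ time and $O(n^3)$ space. I would also build a standard point-location structure on the visibility decomposition so that the cell containing any query point is found in $O(\log n)$ time.

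At query time, upon receiving an arbitrary line segment $pq$, I would invoke Lemma~\ref{lemma3.1} to reduce to the case where $pq$ is a polygonal edge of a sub-polygon, which does not change the asymptotic bounds because the visibility regions of the sub-polygons are inherited from those of~$\P$. I would then locate the regions $V_p$ and $V_q$ containing $p$ and $q$, walk in the dual graph from each of them to a nearest sink, and, using Lemmas~\ref{lem:lemma4} and~\ref{lem:lemma5} in reverse along the walk, rebuild the critical information and the 1st-type $\SPT$ edges at $V_p$ and $V_q$ in constant time per arc. With this data available, the modified Guibas-style traversal of Section~\ref{sec:wvp} can be run, and by Lemma~\ref{lemma3} every vertex it visits belongs to $\WVP(pq)$, so its cost is $O(|\WVP(pq)|)$.

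The main obstacle, and the heart of the argument, is to bound the cost of the two walks to the sinks; without care they could in principle traverse the entire dual graph. The key observation is that each arc of the dual graph corresponds to exactly one vertex of $\P$ that enters or leaves the visibility sequence as the viewpoint crosses the corresponding critical constraint, and every vertex encountered on the walk from $V_p$ (respectively $V_q$) to a sink is visible from $p$ (respectively $q$), hence weakly visible from $pq$ and therefore a vertex of $\WVP(pq)$. Consequently each walk uses at most $O(|\WVP(pq)|)$ arcs, and since by Lemmas~\ref{lem:lemma4} and~\ref{lem:lemma5} each update costs $O(1)$ (the debit numbers ensure that critical-number propagation happens lazily during the $\SPT$ traversal), the walk cost is absorbed into the output. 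Putting the $O(\log n)$ point location, the two $O(|\WVP(pq)|)$ walks, the lazy $\SPT$ access, and the $\WVP$ traversal of Lemma~\ref{lem:primaryresult} together yields the claimed query time of $O(\log n + |\WVP(pq)|)$, completing the proof.
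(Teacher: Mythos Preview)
Your proposal is correct and follows essentially the same route as the paper: precompute the critical information and 1st-type $\SPT$ edges only at the $O(n^2)$ sinks, walk in the directed dual graph from the query regions to sinks while applying Lemmas~\ref{lem:lemma4} and~\ref{lem:lemma5}, and bound each walk by observing that every arc crossed corresponds to a vertex visible from $p$ (or $q$) and hence belonging to $\WVP(pq)$. This is exactly the argument the paper gives in the paragraph preceding the theorem; you have merely spelled it out in more detail.
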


\section{Weak Visibility queries in Polygons with Holes}
\label{sec:holes}
In this section, we propose an algorithm for 
computing the weak visibility polygons in polygonal domains. 
Let $\P$ be a polygon with $h$ holes and $n$ total vertices.
Also let $pq$ be a query line segment. 
We use the idea presented \cite{zarei} and convert the non-simple 
polygon $\P$ into a simple polygon $\P_s$. Then, we use the algorithms of computing $\WVP$ in 
simple polygons to compute a preliminary version of $\WVP(pq)$. With some additional work, 
we find the final $\WVP(pq)$.

A hole $H$ can be eliminated by adding two {\em cut-diagonals} connecting a vertex of 
$H$ to the boundary of $\P$. By cutting $\P$ along these diagonals, we will
have another polygon in which $H$ is on its boundary.
We repeat this procedure for all the holes and produce a simple polygon $\P_s$.

\begin{figure}[h]
  \centering
  \includegraphics[width=.6\columnwidth]{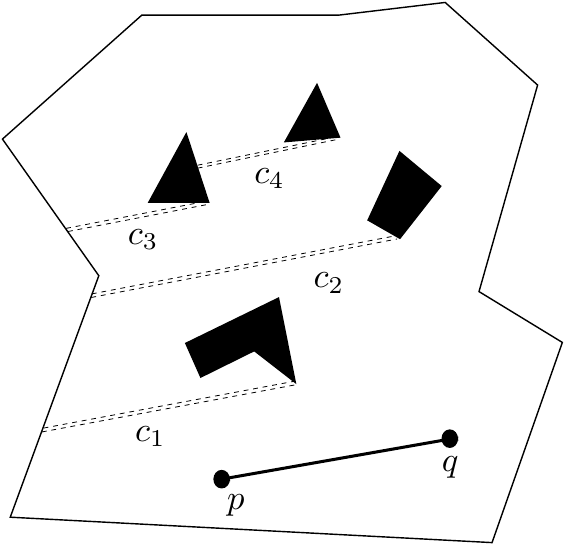}
  \caption{Adding the cut-diagonals to make a simple polygon $\P_s$.}
  \label{fig:wvp3-cuts}
\end{figure}

Let $l$ be the supporting line of $pq$.
For simplicity, we assume that all the holes are on the same side of $l$. Otherwise, we can
split the polygon along $l$ and generate two sub-polygons that satisfy this requirement.
To add the cut-diagonals, we select the nearest point of each hole
to $l$, and perform a ray shooting query from that point in the left direction of $l$,
to find the first intersection with a point of $\P$ (see Figure \ref{fig:wvp3-cuts}). 
This point can be a point on the border of $\P$
or a point on the border of another hole. We select the shooting segment to be the cut-diagonal.
Finding the nearest points of the holes can be done in $O(n\log n)$ time. Also,
performing the ray shooting procedure for each hole can be done in $O(n)$ time.
Therefore, adding the cut-diagonals can be done in total time of $O(n (h + \log n))$.
The resulting simple polygon will have $O(n+2h)$ vertices.
As $h$ is $O(n)$, the number of vertices of $\P_s$ is also $O(n)$.

Having a simple polygon $\P_s$, we compute $\WVP_s(pq)$ in $\P_s$ by using
the algorithm of Section \ref{sec:guibas}.
Next, we add the edges of the polygon that can be seen through the cut-diagonals.
An example of the algorithm can be seen in Figure \ref{fig:w3-steps}.
First, we compute $WVP_s(pq)$ in $\P_s$. Then, for each segment of the cut-diagonals
that can be seen from $pq$, we recursively compute the segments of $\P$ that are visible 
from $pq$ through that diagonal. This leads to the final $\WVP(pq)$.

\begin{figure}[h]
  \centering
  \includegraphics[width=1\columnwidth]{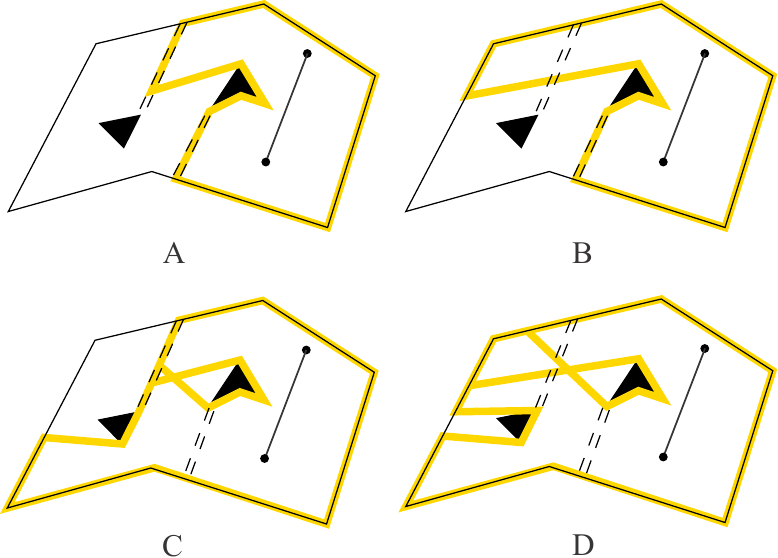} 
  \caption{ Computing $\WVP(pq)$ inside a polygon with holes.}
  \label{fig:w3-steps}
\end{figure}

\subsection{Computing visibility through cut-diagonals}
\label{sec:alg:cuts}

For computing $\WVP(pq)$, we must update $\WVP_s(pq)$ with the edges that 
are visible through the cut diagonals. 
To do this, we define the {\em partial weak visibility polygon}.
Suppose that a simple polygon $\P$ is divided by a diagonal $e$ into two parts, $L$ and $R$.
For a line segment $pq \in R$, we define the partial weak visibility polygon
$\WVP_{L}(pq)$ to be the polygon $\WVP(pq) \cap L$.
In other words, $\WVP_L(pq)$ is the portion of $\P$ that is weakly visible from $pq$ {\em through} $e$.
To compute $\WVP_L(pq)$, one can compute $\WVP(pq)$ by the algorithm of Section \ref{sec:guibas}, 
and then report those vertices in $L$.

\begin{lemma} 
\label{lemma:partial}
Given a polygon $\P$ and a diagonal $e$ which cuts $\P$ into two parts,
$L$ and $R$, for any query line segment $pq \in R$, the partial weak visibility
polygon $\WVP_L(pq)$ can be computed in $O(n)$ time.
\end{lemma}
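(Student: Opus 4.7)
The plan is to follow the approach sketched immediately before the lemma statement: apply the linear-time algorithm of Guibas et al.\ (Section~\ref{sec:guibas}) to compute $\WVP(pq)$ inside all of $\P$, then clip the result against the diagonal $e$. The first step needs $\SPT(p)$ and $\SPT(q)$, each computable in $O(n)$ time in the simple polygon $\P$, followed by the two DFS-based cutting passes described in Section~\ref{sec:guibas}; this produces $\WVP(pq)$ as a simple polygon with $O(n)$ vertices in $O(n)$ total time. If $pq$ is not an edge of $\P$, we first invoke Lemma~\ref{lemma3.1} to split $\P$ along the supporting line of $pq$, which still keeps the whole construction linear.

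The second step is to isolate $\WVP_L(pq) = \WVP(pq) \cap L$ from the polygon just computed. Since $\WVP(pq)$ is simple with $O(n)$ vertices and $e$ is a single chord of $\P$, one traversal of the boundary of $\WVP(pq)$ suffices: we retain every vertex and edge that lies on the $L$-side of the supporting line of $e$, and insert at most two new vertices where this boundary crosses $e$. The traversal runs in $O(n)$ time, giving an overall bound of $O(n)$.

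Correctness is immediate from the definition $\WVP_L(pq) := \WVP(pq) \cap L$: a point $v \in L$ is weakly visible from $pq$ through $e$ precisely when $v$ lies in $\WVP(pq)$, and the clipping step is a standard polygon–halfplane intersection that produces exactly this set. There is no real obstacle in the argument; in particular, no clever early termination is needed, since the worst-case complexity of the output $\WVP_L(pq)$ is already $\Theta(n)$, so computing all of $\WVP(pq)$ and then clipping costs at most a constant factor more than necessary.
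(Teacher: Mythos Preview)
Your proposal is correct and follows exactly the approach the paper sketches in the sentence immediately preceding the lemma: compute $\WVP(pq)$ in $\P$ with the linear-time algorithm of Section~\ref{sec:guibas}, then keep the portion lying in $L$. The paper gives no proof beyond that one line, so your write-up is already more detailed than the original.
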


Lemma \ref{lemma:partial} only holds for simple polygons, but we use its idea 
for our algorithm. Assume that $\P$ has only one hole $H$ and this hole has been eliminated 
by the cut $u_1u_2$. 
Let $v_1v_2$ be another cut which is on the supporting line of $u_1u_2$ and is on the
other side of $H$,  such that $v_1$ is on the border of $H$ and $v_2$ is on the border of $\P$.
We can also eliminate $H$ by $v_1v_2$ and obtain another simple polygon $\P'_s$. 
Now Lemma \ref{lemma:partial} can be applied to the polygon $\P'_s$ and
answer partial weak visibility queries through the cut $u_1u_2$. Following the terminology 
used by \cite{zarei}, we denote this algorithm by $\SeeThrough(H)$. 

By performing the $\SeeThrough(H)$ algorithm once for each hole $H_i$ and assuming that $\P$
has been cut to a simple polygon, we can extend this algorithm to more holes.
This leads to $h$ data structures of size $O(n)$ for storing the simple polygons to
perform Lemma \ref{lemma:partial} for $H_i$.
Using these data structures, we can find the edges of $\P$ that
are visible from $pq$ through the cut-diagonals.

\subsection{The algorithm}
\label{sec:wv3:main-algorithm}
We first add the cut-diagonals to make a simple polygon $\P_s$. 
Then, we compute $\WVP_s(pq)$ and find the set of segments that
are visible from $pq$ in $\P_s$. If a segment $e$ of the cut-diagonal of a hole $H$ is visible 
from $pq$, we use Lemma \ref{lemma:partial} and replace that segment with the partial
weak visibility polygon of $pq$ through that segment. 
We continue this for every cut-diagonal that can be seen from $pq$.
Due to the nature of visibility, this procedure will end. If we have processed $h'$ 
segments of the cut-diagonals, we end up with $h' + 1$ simple polygons of size $O(n)$. It can 
be easily shown that the union of these polygons is $\WVP(pq)$.

Now let us analyze the running time of the algorithm. 
The cut-diagonals can be added in $O(nh+ n\log n)$ time.
Running the algorithm of Theorem \ref{theom:weak_in_simple} in $\P_s$
takes $O(n)$ time. In addition, for each segment of the cut-diagonals that has appeared in $\WVP_s(pq)$, 
we perform the algorithm of Lemma \ref{lemma:partial} in $O(n)$ time.
In general, we have the following lemma:

\begin{lemma}
\label{lemma:wvp3:primary-result}
The time needed to compute $WVP(pq)$ as a set of $h'$ simple polygons
of size $O(n)$ is $O(nh' + n \log n)$, where $h'$ is the
number of cut-diagonals that has been appeared in $\WVP_s(pq)$ 
during the algorithm.
\end{lemma}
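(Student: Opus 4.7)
The plan is to bound the cost of each stage of the algorithm in Section \ref{sec:wv3:main-algorithm} separately and sum them. The three stages are: (i) building the simple polygon $\P_s$ from $\P$ by inserting cut-diagonals, (ii) computing $\WVP_s(pq)$ inside $\P_s$ with the linear-time algorithm of Section \ref{sec:guibas}, and (iii) iteratively invoking $\SeeThrough(H)$ for each cut-diagonal segment of some hole $H$ that appears on the boundary of the partial weak visibility polygon constructed so far.

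For stage (i), identifying the hole vertices nearest to the supporting line of $pq$ costs $O(n\log n)$, and shooting one ray per hole toward this line costs $O(n)$ each. This yields $\P_s$, which has $n + 2h = O(n)$ vertices. Stage (ii) is then $O(n)$ by Guibas {\em et al.}'s linear-time weak-visibility algorithm on a simple polygon of linear size. For stage (iii), by definition of $h'$, exactly $h'$ cut-diagonal segments appear across the entire recursion; each triggers one call to Lemma \ref{lemma:partial} at cost $O(n)$ and contributes one new sub-polygon of size $O(n)$, so the total work of this stage is $O(nh')$.

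The main obstacle I expect is the accounting: a naive summation gives $O(nh) + O(n) + O(nh') = O(nh + nh' + n\log n)$, and to obtain the tighter bound $O(nh' + n\log n)$ one must absorb the $O(nh)$ cost of stage (i) into the $O(nh')$ cost of stage (iii). The natural route is to insert cut-diagonals lazily: cut a hole $H$ only once a segment separating $H$ from the current partial $\WVP$ is actually found on the boundary, so that no ray-shoot is charged to a hole whose cut-diagonal never appears in any $\WVP_s$. A secondary point that must be verified is that the recursion in stage (iii) never revisits the same cut-diagonal segment, which follows because each such segment splits $\P$ into a fixed pair of regions and $\SeeThrough$ only explores the side opposite to the current partial $\WVP$, so each segment contributes at most one $O(n)$ sub-polygon. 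Once both points are established, the union of the resulting $h'+1$ simple sub-polygons is $\WVP(pq)$ by construction, and the time bound is immediate.
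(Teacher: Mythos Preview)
Your three-stage decomposition is exactly the argument the paper gives in the paragraph immediately preceding the lemma: $O(nh+n\log n)$ for inserting the cut-diagonals, $O(n)$ for $\WVP_s(pq)$ in $\P_s$, and $O(n)$ per visible cut-diagonal segment via Lemma~\ref{lemma:partial}. You are also right that these pieces sum to $O(nh+nh'+n\log n)$ rather than the stated $O(nh'+n\log n)$; the paper does not address this discrepancy and simply asserts the lemma, so your accounting is in fact more careful than the paper's.

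Your proposed repair by lazy cut-insertion, however, is not what the paper does and is problematic as written: the linear-time algorithm of Section~\ref{sec:guibas} needs a \emph{simple} polygon as input, so you cannot run stage~(ii) before all holes have been cut, and hence you cannot discover which holes to cut before running stage~(ii). A cleaner way to absorb the $O(nh)$ term is to note that all rays in stage~(i) are parallel to the supporting line $l$ of $pq$; a trapezoidal decomposition of $\P$ in that direction, built in $O(n\log n)$ time, answers each of the $h$ ray-shoots in $O(\log n)$, dropping stage~(i) to $O(n\log n)$ outright. Finally, your claim that each cut-diagonal segment is visited at most once is not literally correct---the same cut-diagonal can appear several times through different chains of holes, which is precisely why the subsequent lemma bounds $h'$ by $O(h^2)$ rather than by $h$---but since $h'$ is \emph{defined} as the total number of such appearances, the $O(nh')$ bound for stage~(iii) holds tautologically and needs no separate non-revisit argument.
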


\begin{figure}[h]
  \centering
  \includegraphics[width=.6\columnwidth]{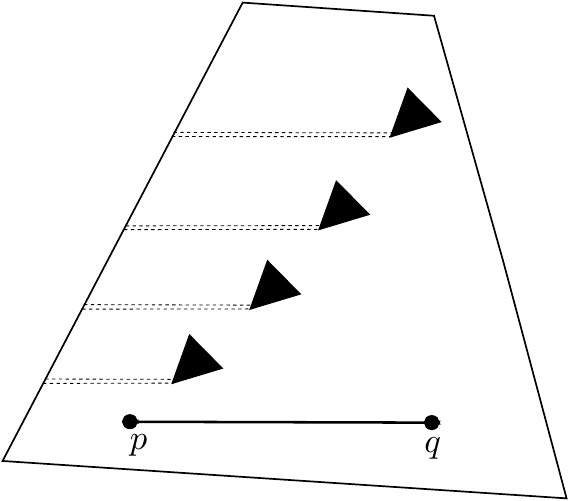} 
  \caption{A polygon with tight bound of $h'$.}
  \label{fig:wvp3-h-bound}
\end{figure}

\begin{lemma}
The upper bound of $h'$ is $O(h^2)$ and this bound is tight.
\end{lemma}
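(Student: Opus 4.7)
The plan is to prove the two bounds separately, beginning with the upper bound by a charging argument over the recursion tree of $\SeeThrough$ calls, and then presenting a matching construction for tightness.

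For the upper bound, I would first observe that once the cut-diagonals are fixed, each hole $H_i$ defines a unique auxiliary simple polygon (the one produced by replacing $H_i$'s cut with the opposite cut), so $\SeeThrough(H_i)$ is a well-defined, once-and-for-all operation with respect to $pq$. Hence across the whole algorithm, $\SeeThrough(H_i)$ needs to be invoked at most once per hole, for a total of at most $h$ invocations. Next, I would argue that each such invocation can expose at most $O(h)$ new cut-diagonal segments: inside the auxiliary simple polygon for $H_i$ there are only $h-1$ cut-diagonals (one for each other hole), and Lemma~\ref{lemma:partial} applied in that polygon can report each of them at most a constant number of times as a segment of $\WVP_L(pq)$. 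Combined with the at most $h$ cut-diagonals that can appear in the initial $\WVP_s(pq)$, this gives
\[
h' \;\le\; h \;+\; h \cdot O(h) \;=\; O(h^2).
\]

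For tightness, I would exhibit an explicit family of polygons, of the flavor indicated by Figure~\ref{fig:wvp3-h-bound}, where the bound is realized. The idea is to place $h$ long, thin rectangular holes in a staircase-like configuration on one side of the supporting line of $pq$, arranged so that (i) every hole is initially seen in $\WVP_s(pq)$, contributing its cut-diagonal, and (ii) after invoking $\SeeThrough(H_i)$ on each hole $H_i$, the resulting auxiliary polygon lets $pq$ see through to $\Omega(h)$ of the other holes' cut-diagonals (the staircase is designed so that from the ``other side'' of $H_i$ one can look past $\Omega(h)$ further holes). Summing over the $h$ choices of $H_i$ yields $\Omega(h^2)$ distinct cut-diagonal appearances, matching the upper bound.

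I expect the main obstacle to be the second step of the upper bound, namely ruling out the possibility that a single cut-diagonal contributes $\omega(1)$ segments to the partial polygon of a single $\SeeThrough$ call, and therefore also ruling out any cascading re-processing that would push $h'$ beyond $O(h^2)$. Establishing this cleanly requires using the fact that each cut-diagonal is a single chord of the auxiliary simple polygon, so its intersection with the weak visibility polygon of $pq$ is a bounded number of sub-segments, together with the observation that each such sub-segment belongs uniquely to the ordered pair (viewing hole, viewed hole). Once this charging is made precise, the sum $O(h) \cdot O(h) = O(h^2)$ follows immediately, and the geometric construction above shows that no better bound is possible.
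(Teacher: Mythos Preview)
Your upper-bound argument has a real gap, and it stems from not using the one geometric fact the paper's proof hinges on. You assert that $\SeeThrough(H_i)$ ``needs to be invoked at most once per hole,'' but that is not how the algorithm in Section~\ref{sec:wv3:main-algorithm} works: it processes each visible \emph{segment} of a cut-diagonal separately, and the paper itself states explicitly (first line of the improvement subsection) that $\SeeThrough(H)$ may be performed up to $h$ times for a single hole. So your count of ``at most $h$ invocations'' does not match the quantity $h'$ being bounded. Your fallback---that each invocation exposes only $O(h)$ cut-diagonal segments because each other cut appears as $O(1)$ pieces in a partial weak visibility polygon---is exactly the step you flag as an obstacle, and it is not true in general for an arbitrary chord of a simple polygon intersected with a weak visibility polygon. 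You never establish it here, and the ``each sub-segment belongs uniquely to the ordered pair (viewing hole, viewed hole)'' remark at the end presupposes an asymmetry you have not proved.

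The paper's argument is different and much shorter. It exploits the particular way the cut-diagonals were constructed: each is obtained by shooting from the closest point of a hole in the direction of the supporting line $l$ of $pq$, so all cut-diagonals lie strictly on one side of $l$, are parallel to it, and are pairwise disjoint. This gives a total order by distance from $l$, and the crucial one-line consequence is that if $pq$ sees cut-diagonal $l_1$ through cut-diagonal $l_2$, then $l_2$ is strictly closer to $pq$ than $l_1$, so $pq$ cannot see $l_2$ through $l_1$. That asymmetry is what caps the number of (viewing cut, viewed cut) incidences at $O(h^2)$ and hence bounds $h'$; your charging scheme would need this same ordering fact to close its gaps, but you never invoke it. Your tightness construction is in the right spirit and matches the paper's Figure~\ref{fig:wvp3-h-bound} example.
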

\begin{proof}
We have selected the cut-diagonals in such a way that 
the query line segment $pq$ does not intersect the supporting line of any of 
the cut-diagonals. 
Also, the cut-diagonals do not intersect each other.
Therefore, if $pq$ sees a cut-diagonal $l$ through another cut-diagonal $l'$, then 
$pq$ cannot see $l'$ through $l$. Hence, the upper bound of $h'$ is $O(h^2)$. 
Figure \ref{fig:wvp3-h-bound} shows a sample with tight bound of $h′$.
\end{proof}

\subsection{Improving the algorithm}

In the algorithm of the previous section, we may perform the $\SeeThrough(H)$ algorithm
up to $h$ times for each hole, resulting the high running time of $O(nh^2)$. 
In this section, we show how to change this algorithm and improve the final result. 

A vertex $v$ of the polygon $\P$ can see the line segment $pq$ directly or through the 
cut-diagonals. More precisely, $v$ can see up to $h$ parts of $pq$ through different cut-diagonals. 
These parts can be categorized by the critical constraints that are tangent to the holes
and pass through $v$ and cut $pq$. The next lemma put a limit on the number of these critical 
constraints.

\begin{lemma}
The number of the critical constraints that see $pq$ is $O(n\hbar)$, where
$\hbar = \min(h, |\WVP(pq)|)$ is the number of visible holes from $pq$.
\end{lemma}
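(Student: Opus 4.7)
The plan is to classify each critical constraint that sees $pq$ by the pair $(H,v)$, where $H$ is the hole it is tangent to and $v$ is the polygon vertex it passes through, and then to argue that $H$ must be one of the $\hbar$ visible holes. First I would use the structural fact that every such critical constraint is a maximal free-space line segment through two vertices of $\P$: one of them is a reflex vertex $w$ of a hole $H$ (the point of tangency) and the other is a vertex $v$ of $\P$ or of another hole. For a fixed pair $(H,v)$ there are at most two common tangents from $v$ to $H$ that touch $H$ at a vertex of $H$, so each such pair produces at most two critical constraints. This already yields the crude bound $O(nh)$, and the remaining task is to replace $h$ by $\hbar$.

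Next I would establish the key geometric step: if a critical constraint tangent to $H$ at $w$ and passing through $v$ cuts $pq$ at a point $p'$, then $H\in \WVP(pq)$. Since the critical constraint is a free-space segment containing both $v$ and $w$, and since $p'$ also lies on it, the subsegment from $p'$ to $w$ is unobstructed in $\P$. Thus $p'$ sees $w$, and so $w$ is a vertex of $\WVP(pq)$ contributed by $H$; hence $H$ is one of the $\hbar$ visible holes. Summing at most two constraints per pair $(H,v)$ over the $n$ choices of $v$ and $\hbar$ choices of $H$ yields the bound $O(n\hbar)$. The identity $\hbar=\min(h,|\WVP(pq)|)$ then follows because $\hbar\le h$ is trivial and every visible hole contributes at least one vertex to $\WVP(pq)$, so $\hbar\le |\WVP(pq)|$.

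The hard part will be formalizing the geometric step above, in particular that the portion of the critical constraint between $p'$ and $w$ is genuinely free of obstacles. The subtlety is that a line through two vertices can be broken by intervening holes into several disjoint free-space pieces, only one of which is the actual critical constraint of interest; I would need to invoke the convention, implicit in the definition on the previous sections, that each maximal free piece of such a line is treated as its own critical constraint, so that $p'$, $v$ and $w$ all lie on a common free subsegment and the argument above goes through cleanly.
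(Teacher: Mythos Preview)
Your core argument---classifying constraints by the pair $(H,v)$, bounding by two tangents per pair, and showing that any hole $H$ touched by a constraint that cuts $pq$ must itself be visible from $pq$---matches the paper's treatment of what it calls the first kind of constraint. Your justification for why only the $\hbar$ visible holes contribute is in fact more explicit than the paper's, which simply asserts the $O(n\hbar)$ bound for that type without spelling out the visibility step you describe.

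However, your case analysis is incomplete. You assume every critical constraint that sees $pq$ is tangent to some hole $H$ at a vertex $w$, with the second defining vertex $v$ lying on a \emph{different} component. This misses two families that the paper handles separately: (i) constraints induced by two vertices of the \emph{same} hole $H_i$, for which your ``at most two tangents from $v$ to $H$'' count does not apply (the paper bounds these by $O(m_i)$ per hole, summing to $O(n)$ over all holes); and (ii) constraints induced by two reflex vertices of the outer boundary, which touch no hole at all (the paper bounds these by $O(n)$, invoking the simple-polygon result of Bose et~al.). Both families contribute only an additive $O(n)$, which is absorbed into $O(n\hbar)$ whenever $\hbar\ge 1$, so the repair is straightforward---but as written your decomposition does not cover them, and your structural claim that one of the two defining vertices of every relevant constraint is always a hole vertex is false in general.
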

\begin{proof}
Let the number of vertices of the hole $H_i$ be $m_i$. 
There are three kinds of constraints: 
\begin{itemize}
\item For each vertex $v$ that is not on the 
border of $H_i$ and is visible to $H_i$, there are at most two critical constraints that 
touch $H_i$ and cut $pq$. Therefore, the total number of these constraints is $O(n\hbar)$.
\item The number of the critical constraints 
induced by two vertices of $H_i$ that cut $pq$ is $O(m)$. We also have $\sum_i m_i = O(n)$.
\item The number of the critical constraints that cut $pq$ and do not touch any hole is 
$O(n)$ \cite{bose}.
\end{itemize}
Putting these together, we can prove the lemma.
\end{proof}

We preprocess the polygon $\P$ so that, in query time, we can efficiently 
find the critical constraints that cut $pq$.
There are $O(n)$ critical constraints passing through each vertex in $\P$.
Therefore, the set of critical constraints can be computed in $O(n^2 \log n)$ time and 
$O(n^2)$ space. 
As the critical constraints passing through a vertex can be treated as a simple polygon 
(see Figure \ref{fig:wvp3-to-simple}),
we build the ray shooting data structure for each vertex in $O(n)$ time and space, so
that the ray shooting queries can be answered in $O(\log n)$ time.
In query time, we find the critical constraints of each vertex that cut $pq$ in $O(c_v \log c_v)$ time,
or in total time of $O(n\hbar \log n)$ for all the vertices. Here $c_v$ is the number of 
constraints that pass through $v$ and cut $pq$.

\begin{figure}[h]
  \centering
  \includegraphics[width=.4\columnwidth]{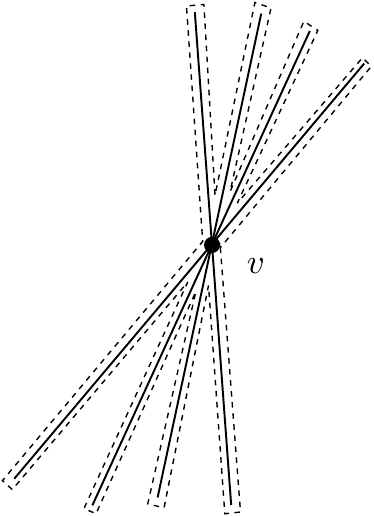} 
  \caption{We can treat the line segments passing through a vertex $v$ as a simple
  polygon (dashed lines) and build the ray shooting data structure  in $O(n)$ time to answer the ray shooting 
  queries.}
  \label{fig:wvp3-to-simple}
\end{figure}

By performing an angular sweep through these lines, we can find the visible 
parts of $pq$ and the visible cut-diagonals from the vertices in $O(n\hbar)$ time.
We store these parts in the vertices, according to the visible cut-diagonal of each part.
Performing this procedure for all the vertices of $\P$, including the vertices of the holes,
and storing the visible parts of $pq$ in each vertex
can be done in $O(n\hbar \log n)$ time and $O(n\hbar)$ space. So, we have the following
lemma:

\begin{figure}[h]
  \centering
  \includegraphics[width=.65\columnwidth]{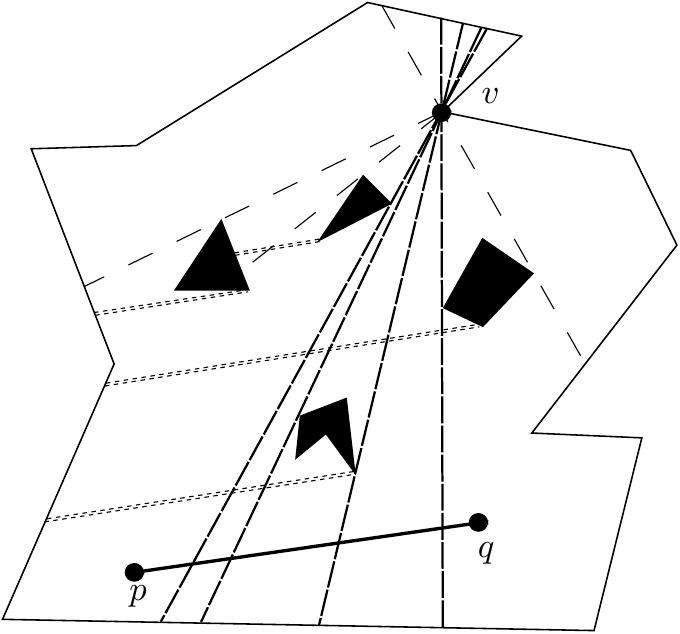} 
  \caption{There are $O(h)$ critical constraints from each vertex 
  of the polygon that hit a cut-diagonal.}
  \label{fig:wvp3-disc-lines}
\end{figure}

\begin{lemma}
Given a polygonal domain $\P$ with $h$ disjoint holes and $n$ total vertices, 
it can be processed into a structure in $O(n^2)$
space and $O(n^2 \log n)$ preprocessing time so that for any query
line segment $pq$, the critical constraints that cut $pq$
can be computed and sorted in $O(n\hbar \log n)$ time, where $\hbar = \min(h, |\WVP(pq)|)$.
\end{lemma}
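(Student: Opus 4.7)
The plan is to build, for each vertex $v$ of $\P$, a per-vertex structure over the critical constraints through $v$, and then at query time use ray shooting to localize those constraints that cut $pq$. The key leverage point is the bound $\sum_v c_v = O(n\hbar)$ supplied by the preceding lemma, where $c_v$ denotes the number of constraints through $v$ that cut $pq$.

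First I would preprocess: through every vertex $v$ there are $O(n)$ critical constraints, which can be enumerated and sorted angularly around $v$ in $O(n \log n)$ time per vertex, for a total of $O(n^2 \log n)$ time and $O(n^2)$ space. This yields the raw set of constraints in a form suitable for binary search by angle.

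Next, for each $v$ I would build a ray-shooting structure on the endpoints of its incident constraints. As suggested by Figure~\ref{fig:wvp3-to-simple}, the endpoints taken in angular order around $v$ form a star-shaped simple polygon on $O(n)$ vertices; a ray from $v$ can be answered in $O(\log n)$ time after $O(n)$ preprocessing, via binary search over the angular order followed by a constant-time intersection test. This fits in $O(n)$ per vertex, or $O(n^2)$ overall, within the preprocessing budget.

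At query time, for each vertex $v$ I would shoot two rays from $v$, one through $p$ and one through $q$, to delimit the angular wedge of constraints through $v$ that hit $pq$; the $c_v$ constraints in that wedge are then read off the presorted list in $O(\log n + c_v)$ time, and sorted along $pq$ in an additional $O(c_v \log c_v)$ time. Summing the $O(\log n + c_v \log c_v)$ per vertex gives $O(n \log n) + O(\log n \cdot \sum_v c_v) = O(n\hbar \log n)$, using $\hbar \ge 1$ in this setting together with the preceding bound on $\sum_v c_v$.

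The main obstacle is verifying that the per-vertex structure really supports $O(\log n)$ ray shooting in $O(n)$ space. This hinges on the observation that the constraint endpoints, joined in angular order, form a star-shaped polygon centered at $v$, so the ray-shooting primitive reduces to binary search over angular wedges plus one edge-intersection — the same primitive already used elsewhere in the paper's visibility machinery — and therefore needs no additional heavy data structures to achieve the claimed bounds.
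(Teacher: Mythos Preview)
Your proposal is correct and follows essentially the same approach as the paper: precompute the $O(n)$ critical constraints through each vertex in $O(n^2\log n)$ total time and $O(n^2)$ space, treat them as a star-shaped simple polygon around $v$ (exactly as in Figure~\ref{fig:wvp3-to-simple}) to support $O(\log n)$ ray shooting after $O(n)$ preprocessing, and at query time spend $O(c_v\log c_v)$ per vertex, summing to $O(n\hbar\log n)$ via the preceding lemma. Your write-up is in fact slightly more explicit than the paper's about how the ray shooting is used and about the $O(n\log n)$ baseline term from vertices with $c_v=0$.
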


It the rest of the paper we show that these critical constraints make an arrangement
that can be used to compute $\WVP(pq)$.

We defined $\WVP_s(pq)$ to be the part of $\P$ that can be seen directly from $pq$.
Let $c_i$ be the cut-diagonal of the hole $H_i$.
We define $\WVP_{c_i}$ to be the part of $\P$ that can be seen $pq$ through $c_i$.
It is clear that $\WVP(pq) =  \cup_{i} \WVP_{c_i}(pq) \cup \WVP_s(pq) $. 

Now, we show how to compute $\WVP_{c_i}$. First notice that $\WVP_{c_i}$ is on the
upper half plane of $c_i$.
Let $\P_{c_i}$ be the part of $\P_s$ that is above $c_i$. 
As $pq$ can see $\P_{c_i}$ through different parts of $c_i$, $\WVP_{c_i}$
may not a simple polygon.

Let $D_i$ be the set of critical constraints originating from the vertices of $\P_{c_i}$
that can see $pq$ and directly cut $c_i$, plus the critical constraints
that can see $pq$ and hit the border of $\P_{c_i}$ and cut $c_i$ just before they hit $\P_{c_i}$.
Each critical constraint is distinguished by one or two reflex vertices. 
We call each one of these vertices as the anchor of the critical constraint.
Also, each one of these 
critical constraints may cut the border of $\P_{c_i}$ at most twice. Let $S_i$ be the segments 
on the border of $\P_{c_i}$ resulted from these cuttings. It is clear that 
$|S_i| = O(n + 2h) = O(n)$. 

Let ${\cal L}_i = \cup_{k=1 \ldots i}(S_k \cup D_k)$, and
let ${\cal A}_i$ be the arrangement induced by the segments of ${\cal L}_i$.
We show that ${\cal A}_i$ partitions $\P_{c_i}$ into visible and invisible 
regions.

\begin{lemma}
\label{lemma:e_exists}
For each point $x \in \P_{c_i}$ that is visible from $pq$, there is 
a segment $e$ in ${\cal L}_i$ that can be rotated around its anchor
until it hits $x$, while remaining visible to $pq$.
\end{lemma}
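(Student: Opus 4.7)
The plan is to trace back from a sight line realizing the visibility of $x$ from $pq$ to a critical constraint in $\mathcal{L}_i$ anchored at a reflex vertex, and then argue by a continuity plus convex-wedge argument that the rotation of this constraint toward $x$ preserves the visibility-to-$pq$ property. The construction is entirely local around an appropriate blocker, which is the object we will take to be the anchor.

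First, because $x \in \P_{c_i}$ is visible from $pq$ with $x$ above $c_i$ and $pq$ below, there exists $y \in pq$ such that the open segment $xy$ lies in $\P$ and crosses $c_i$. I would then consider the maximal relatively open sub-interval $I \subseteq pq$ consisting of all points from which $x$ is visible, and pick a boundary point $y^{*} \in \partial I$ lying in the interior of $pq$. By the standard argument for visibility inside polygonal domains, the line $\ell$ through $x$ and $y^{*}$ must be tangent to some reflex vertex $v$ of $\P_{c_i}$, since it is precisely the obstruction by $v$ that bounds $I$ at $y^{*}$. The critical constraint $e$ supported on $\ell$ and anchored at $v$ sees $pq$ at $y^{*}$ and crosses $c_i$ (because the chord $xy^{*}$ does), so $e \in D_i \subseteq \mathcal{L}_i$.

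Second, I would rotate $e$ about $v$ from its canonical orientation toward the direction from $v$ to $x$. The rotation sweeps an angular wedge at $v$, and the claim is that at every intermediate angle the rotated ray, extended past $v$ on the side facing $pq$, remains unobstructed toward some point of $pq$. This follows from two facts: at the initial angle the ray sees $y^{*} \in pq$, and because $v$ is reflex and the rotation stays inside the free angular wedge at $v$ on the $pq$-side, visibility is an open property of the angular parameter, so it persists throughout the sweep. At the terminal angle, the rotated segment passes through $x$ by construction, which is what we needed.

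The main obstacle will be to rigorously justify that visibility to $pq$ is preserved at every intermediate angle and not only at the two endpoints of the rotation. This requires ruling out the possibility that some other reflex vertex enters the swept wedge and breaks the line of sight; the fix is to invoke the maximality of $I$ together with the observation that any such intruder would itself induce a critical constraint in $D_i$ strictly between $e$ and the line $vx$, so the argument can be iterated with this closer constraint and terminates because the set of reflex vertices is finite. Two smaller case-splits also need attention: when $\partial I$ only contains an endpoint of $pq$ (the anchor must then be taken from $S_i$ on the boundary of $\P_{c_i}$ rather than from $D_i$), and when $\ell$ is simultaneously tangent to two reflex vertices (either of them serves as anchor and one is chosen arbitrarily).
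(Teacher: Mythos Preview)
Your construction has a genuine gap at the step where you assert $e\in D_i$. In this paper the elements of $D_i$ (and hence of $\mathcal{L}_i$) are \emph{critical constraints}, i.e.\ segments determined by \emph{two} polygon vertices (or by a vertex together with a boundary incidence recorded in $S_i$). The line $\ell$ through $x$ and the visibility-interval endpoint $y^{*}$ is tangent to a single reflex vertex $v$; generically it passes through no second vertex of $\P$, so there is no segment of $\mathcal{L}_i$ ``supported on $\ell$''. Your second paragraph then has no well-defined starting segment to rotate. (A symptom of the same confusion: if $e$ really were supported on $\ell$, then $\ell$ already contains $x$ and no rotation would be needed at all.) The fallback you sketch---pick some critical constraint through $v$ and iterate if an intruding vertex appears---does not come for free: you must first exhibit \emph{one} constraint through $v$ that lies in $D_i$, and nothing in your argument produces the required second vertex.

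The paper closes exactly this gap by an \emph{iterated} rotation rather than a single tangency. Starting from an arbitrary visibility chord $xr$ with $r\in pq$, it rotates about $x$ until the segment grazes a first vertex $y_1$, then rotates about $y_1$ until it grazes a second vertex $y_2$, and so on, always in the same angular sense. The cascade terminates when the lower portion of the segment is pinned at a polygon point $z$ (or at an endpoint of $c_i$, or at $p$); at that moment the segment is supported by two polygon features and is therefore a bona fide member of $\mathcal{L}_i$. Reversing the cascade gives the rotation demanded by the lemma, with visibility to $pq$ preserved at every stage by construction. Your visibility-interval idea recovers only the first step of this cascade; to land inside $\mathcal{L}_i$ you need the full iteration.
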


\begin{figure}[h]
  \centering
  \includegraphics[width=.65\columnwidth]{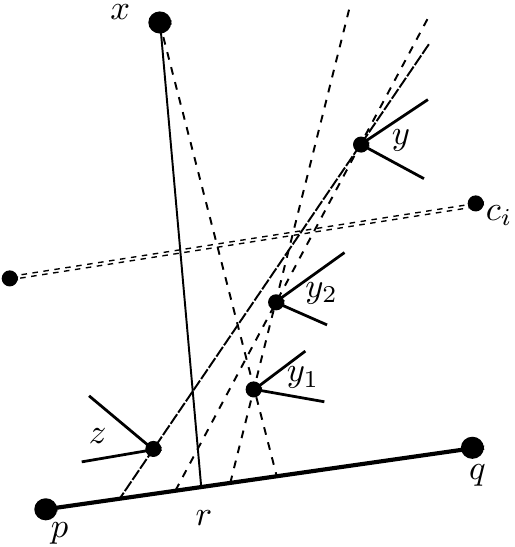} 
  \caption{For each visible point $x \in \P_{c_i}$, there is a  
  critical constraint $yz$ that can be rotated around its anchor $y$ until it hits $x$.}
  \label{fig:wvp3-exist-disc-line}
\end{figure}

\begin{proof}
As $x$ is visible from $pq$, 
it must be visible from some point $r$ of $pq$, such that $xr$ cuts $c_i$
(see Figure \ref{fig:wvp3-exist-disc-line}).
We rotate the segment $xr$ counterclockwise about $x$ until
it hits some vertex $y_1 \in \P$. 
Notice that the case $y_1 = q$ is possible and does not require separate
treatment.
Next, we rotate the segment clockwise about $y_1$ until it hits another vertex $y_2 \in \P$. 
We continue the rotations until the segment reaches one of the endpoints of $c_i$, 
or the lower part of the segment hits a point $z$ of the polygon,
or the segment reaches the end-point $p$. 
Let $y$ be the last point that the segment hits on the upper part of $e$. 
As we only rotate the segment clockwise, this
procedure will end.
It is clear that
$yz$ is a critical constraint in ${\cal L}_i$, and we can reach the point $x$ by rotating
$yz$ counterclockwise about $z$. 
\end{proof}

\begin{lemma}
All the points of a cell $c$ in ${\cal A}_i$ have the same
visibility status w.r.t.\ $pq$.
\end{lemma}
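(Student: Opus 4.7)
The plan is to use a standard continuity argument. Suppose for contradiction that a cell $c$ of ${\cal A}_i$ contains two points $x_1, x_2$ with different visibility statuses w.r.t.\ $pq$, and let $\gamma$ be a continuous path inside $c$ connecting them. Because $c$ is a cell of ${\cal A}_i$, the path $\gamma$ crosses no segment of ${\cal L}_i$. I will show that along any such $\gamma$ a visibility transition is forced to occur on a segment of ${\cal L}_i$, which will yield the contradiction.

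First, I would parametrize $\gamma$ by $t \in [0,1]$ and let $t^*$ be the first parameter at which the visibility status changes; without loss of generality, $\gamma(t)$ is visible from $pq$ through $c_i$ for $t < t^*$ and invisible for $t > t^*$ (the reverse direction is symmetric). By Lemma \ref{lemma:e_exists}, for each $t < t^*$ there is a critical constraint $e(t) \in {\cal L}_i$ anchored at some reflex vertex $y(t)$, such that rotating $e(t)$ about $y(t)$ sweeps a segment from $pq$ to $\gamma(t)$ that remains entirely visible. Since the set of possible anchors is finite, I may pass to a subsequence and assume $y(t) = y$ is constant as $t \to t^*$.

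The heart of the argument is to classify what can cause the witness sweep from $y$ to $\gamma(t^*)$ to fail at $t = t^*$. There are only three possibilities: (a)~the rotated segment reaches the angular bound of its wedge at $y$, which means $\gamma(t^*)$ lies on another critical constraint in $D_k$ for some $k \le i$; (b)~the segment $y\gamma(t^*)$ becomes tangent to a hole or to $\partial \mathcal{P}$, placing $\gamma(t^*)$ on a boundary segment in $S_k$; or (c)~$\gamma(t^*)$ lies on $\partial \mathcal{P}_{c_i}$ itself, which is already represented by edges in some $S_k$. In each case $\gamma$ must touch a segment of ${\cal L}_i$ at $t=t^*$, and since $\gamma$ continues to $x_2$, it must actually cross it, contradicting the assumption that $\gamma \subset c$.

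The main obstacle is the classification step: verifying that no visibility transition can occur that is not covered by some element of $D_k \cup S_k$ for $k \le i$. This is essentially a case analysis on how the witnessing critical constraint constructed in the proof of Lemma \ref{lemma:e_exists} can degenerate, and it relies on the specific definition of $D_i$ as including both constraints that directly cut $c_i$ and those that hit $\partial \mathcal{P}_{c_i}$ just before cutting $c_i$. Once this enumeration is completed and shown to be exhaustive, the cell-wise constancy of visibility follows immediately.
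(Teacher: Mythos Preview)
Your argument is in the same spirit as the paper's but takes a longer route. The paper works with the straight segment $uv$ joining a visible point $u\in c$ and an invisible point $v\in c$, and lets $x$ be the first invisible point on $uv$ seen from $u$. It then applies Lemma~\ref{lemma:e_exists} \emph{once}, to $u$, obtaining a single anchor $z$ and a witness segment that can be rotated about $z$ to reach $u$ while staying visible to $pq$; continuing this same rotation along $uv$ toward $x$, the segment $zx$ is forced to be a critical constraint (every point strictly between $u$ and $x$ is visible, $x$ is not), and this constraint crosses the interior of $c$ --- contradiction. Using the straight segment and a single anchor replaces your subsequence/compactness step and the three-way case split by one direct geometric observation.

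One point in your outline needs correcting: in case~(b), tangency of $y\gamma(t^*)$ to an obstacle at a reflex vertex $w$ places $\gamma(t^*)$ on the critical constraint determined by $w$ (and possibly $y$), i.e.\ on an element of some $D_k$, not on a boundary segment in $S_k$. The segments in $S_k$ lie on $\partial\mathcal P_{c_i}$, and a point in the interior of a cell $c$ cannot lie there, so your case~(c) is in fact vacuous for interior points of $c$; all genuine transitions collapse to the single conclusion that $\gamma(t^*)$ sits on a critical constraint in $\mathcal L_i$, which is exactly what the paper asserts in one line.
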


\begin{figure}[h]
  \centering
  \includegraphics[width=.5\columnwidth]{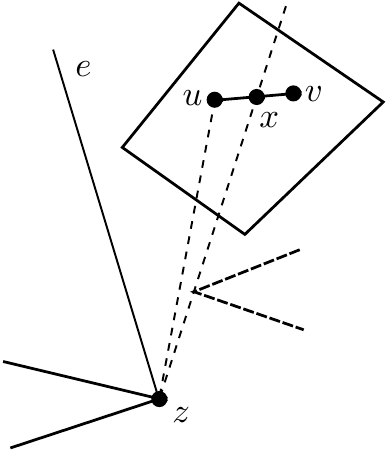} 
  \caption{All the points of a cell have the same visibility status.}
  \label{fig:wvp3-unique-cell}
\end{figure}

\begin{proof}
Suppose that the points $u$ and $v$ are in $c$, and $u$ is visible 
and $v$ is invisible from $pq$. Let $uv$ be the line segment connecting
$u$ and $v$, and $x$ be the nearest point to $u$ on $uv$ that is invisible from $pq$.
According to Lemma \ref{lemma:e_exists}, there is a segment $e \in {\cal L}_i$ with
$z$ as its anchor such that if we rotate $e$ around $z$, it will hit $u$. 
We continue to rotate $e$ until it hits $x$. As $x$ is invisible from $pq$,
$zx$ must be a critical constraint. This means that we have another critical constraint
from a vertex $z \in \P$ that sees $pq$, and it crosses the cell $c$. Thus,
the assumption that $c$ is a cell in ${\cal A}_i$ is contradicted.
\end{proof}

To compute the final $\WVP(pq)$, we have to compute 
$\cup_{i} \WVP_{c_i}(pq) \cup \WVP_s(pq)$. 
$\WVP_s(pq)$ is a simple polygon of size $O(n)$ which can be represented
by $O(n)$ line segments. Also, $\WVP_{c_i}(pq)$ can be represented
by the arrangement of $O(n + 2h + d_i)$ line segments,
where $d_i = |D_i|$. 
It can be easily shown that $\sum_i d_i = O(n\hbar)$. 
Therefore, $\WVP(pq)$ can be represented as the arrangement of 
$O(\sum_{i=1 \ldots \hbar} n +  \sum_{i=1 \ldots h} d_i) = O(n\hbar)$ line segments.

In the next section, we consider the problem of computing the boundary of $\WVP(pq)$.

\subsection{Computing the Boundary of $\WVP(pq)$}
We showed how to output $\WVP(pq)$ as an arrangement of $O(n\hbar)$ line segments.
Here, we show that $\WVP(pq)$ can be output as a polygon in $O(n\hbar \log n + |\WVP(pq)|)$ time.

Balaban \cite{Bal95} showed that by using a data structure of size $O(m)$, one can report
the intersections of $m$ line segments in time $O(m \log m + k)$, where $k$ is the number
of intersections. This algorithm is optimal because at least $\Omega(k)$ time is needed
to report the intersections.
Here, we have $O(n\hbar)$ line segments and reporting all the intersections
needs $O(n\hbar \log n + k)$ time and $O(n\hbar)$ space.
With the same running time, we can classify the edge fragments by using 
the method of Margalit and Knott \cite{MK89}, while reporting the line segment intersections.
We can summarize this in the following theorem:

\begin{theorem}
A polygon domain $\P$ with $h$ disjoint holes and $n$ vertices can be 
preprocessed in time $O(n^2 \log n)$ to build a data structure of size $O(n^2)$, so that the visibility 
polygon of an arbitrary query line segment $pq$ within $\P$ can be computed in 
$O(n \hbar \log n + k)$ time and $O(n \hbar)$ space, where 
$k$ is the size of the output which is $O(n^2 h^2)$ and $\hbar$ is the number of visible holes from $pq$. 
\end{theorem}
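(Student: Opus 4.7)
The plan is to assemble the theorem from three ingredients already in place: the preprocessing of the critical constraints, the arrangement-based description of $\WVP(pq)$, and an off-the-shelf segment-intersection routine.

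First I would establish the preprocessing cost. The only structure built in advance is the one from the preceding lemma: for every vertex of $\P$, the $O(n)$ critical constraints through it are stored together with a ray-shooting data structure of size $O(n)$. Summed over all vertices this gives $O(n^2)$ space and $O(n^2 \log n)$ construction time, matching the claim. No other preprocessing is required, since the simple-polygon machinery of Theorem \ref{theom:weak_in_simple} is applied to $\P_s$ at query time (and $\P_s$ has $O(n)$ vertices).

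For the query, I would proceed in three phases. (i) Build $\P_s$ by adding the cut-diagonals in $O(n(h+\log n))$ time and compute $\WVP_s(pq)$ in linear time via the algorithm of Section \ref{sec:guibas} applied to the simple polygon $\P_s$. (ii) Invoke the preceding lemma to enumerate, at every vertex, the critical constraints that cut $pq$, sorted around each vertex; this costs $O(n\hbar \log n)$ time and produces the segment sets $D_i$ and $S_i$ whose union ${\cal L}_{\hbar}$ has total cardinality $O(n\hbar)$, as argued in the previous subsection. (iii) Feed the $O(n\hbar)$ segments of ${\cal L}_{\hbar}$ together with the boundary edges of $\WVP_s(pq)$ and of each $\P_{c_i}$ into Balaban's algorithm \cite{Bal95}, which in $O(n\hbar \log n + k)$ time and $O(n\hbar)$ space reports every intersection; interleaved with this sweep I would run the edge-fragment classification of Margalit and Knott \cite{MK89}, which decides for each fragment whether it lies on the boundary of $\cup_i \WVP_{c_i}(pq) \cup \WVP_s(pq)$ within the same time bound. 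By the two lemmas of the previous subsection the resulting cell complex gives exactly $\WVP(pq)$, and extracting its boundary polygon adds only $O(k)$ extra work.

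The only remaining bookkeeping is the worst-case bound $k = O(n^2 h^2)$ on the output complexity, which follows from Suri and O'Rourke's \cite{suri} construction together with the $O(h^2)$ bound on the number of cut-diagonals that $pq$ can see (each of the $\hbar$ visible holes contributing an arrangement of $O(n+h)$ relevant segments inside $\P_{c_i}$). The main obstacle I expect is purely the accounting in phase (iii): one has to verify that the segments handed to Balaban's sweep already include every edge that can contribute to the boundary (so that no visible vertex is missed) and that the classification step correctly distinguishes visible from invisible cells using only local information at each intersection; both facts, however, reduce directly to the two structural lemmas proved in the previous subsection, so no new geometric argument is needed.
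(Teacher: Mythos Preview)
Your proposal is correct and follows essentially the same route as the paper: assemble the $O(n^2)$ per-vertex critical-constraint preprocessing from the preceding lemma, the $O(n\hbar)$-segment description of $\WVP(pq)$ established in the previous subsection, and then Balaban's intersection algorithm together with the Margalit--Knott edge-classification to output the boundary in $O(n\hbar\log n+k)$ time and $O(n\hbar)$ space. One small slip: in your preprocessing paragraph you cite Theorem~\ref{theom:weak_in_simple} for the computation on $\P_s$, but (as you correctly write in phase~(i)) it is the linear-time algorithm of Section~\ref{sec:guibas} that is actually used, so no $O(n^3)$ simple-polygon preprocessing enters the picture.
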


\section{Conclusion}
We considered the problem of computing the weak visibility polygon of line segments
in simple polygons and polygonal domains. In the first part of the paper, we presented an algorithm to 
report $\WVP(pq)$ of any line segment $pq$ in a simple polygon of size $n$ in 
$O(\log n + |\WVP(pq)|)$ time, by spending $O(n^3 \log n)$ time preprocessing the polygon 
and maintaining a data structure of size $O(n^3)$. 

In the second part of the paper, we have considered the same problem in polygons with holes. 
We presented an algorithm to compute $WVP(pq)$ of any $pq$ 
in a polygon with $h$ polygonal obstacles 
with a total of $n$ vertices in time $O(n \hbar \log n + k)$
by spending $O(n^2 \log n)$ time preprocessing the polygon and maintaining a data structure of
size $O(n^2)$. The factor $\hbar$ is an output sensitive parameter of size at most
$\min(h, k)$, and $k = O(n^2 h^2)$ is the size of the output.

\end{document}